\begin{document}

\newtheorem{theorem}{Theorem}[section]
\newtheorem{lemma}[theorem]{Lemma}
\newtheorem{corollary}[theorem]{Corollary}
\newenvironment{proof}[1][Proof]{\begin{trivlist}
\item[\hskip \labelsep {\bfseries #1}]}{\end{trivlist}}

\newcommand{\qed}{\nobreak \ifvmode \relax \else
    \ifdim\lastskip<1.5em \hskip-\lastskip
    \hskip1.5em plus0em minus0.5em \fi \nobreak
    \vrule height0.67em width0.67em depth0.0em\fi}

\title{Turing Tumble is Turing-Complete}  

\author{Lenny Pitt\\
Department of Computer Science\\
University of Illinois\\
Urbana, IL\\
lennypitt@gmail.com}

\maketitle

\begin{abstract}
It is shown that the toy Turing Tumble, suitably extended with an infinitely long game board and unlimited supply of pieces, is Turing-Complete.  This is achieved via direct simulation of a Turing machine.  Unlike previously informally presented constructions, we do not encode the finite control infinitely many times, we need only one trigger/ball-hopper pair, and we prove our construction correct.    We believe this is the first natural extension of a marble-based computer that has been shown to be universal.  
\end{abstract}

\section{Introduction}

{\href{https://en.wikipedia.org/wiki/Turing_Tumble}{Turing
Tumble}}  is a mechanical toy introduced in 2018 after a very successful
kickstarter campaign. As highlighted
on {\href{https://www.turingtumble.com}{the
website}},  the toy allows the simulation of basic logic gates via
plastic components laid out on a ramped board and driven by small metal
marbles that roll down the board through the component layout to affect
various simple computations. The next section describes the  available pieces and 
their functions.  

\begin{figure*}[htp]
\begin{center}
\includegraphics[width=1.5in]{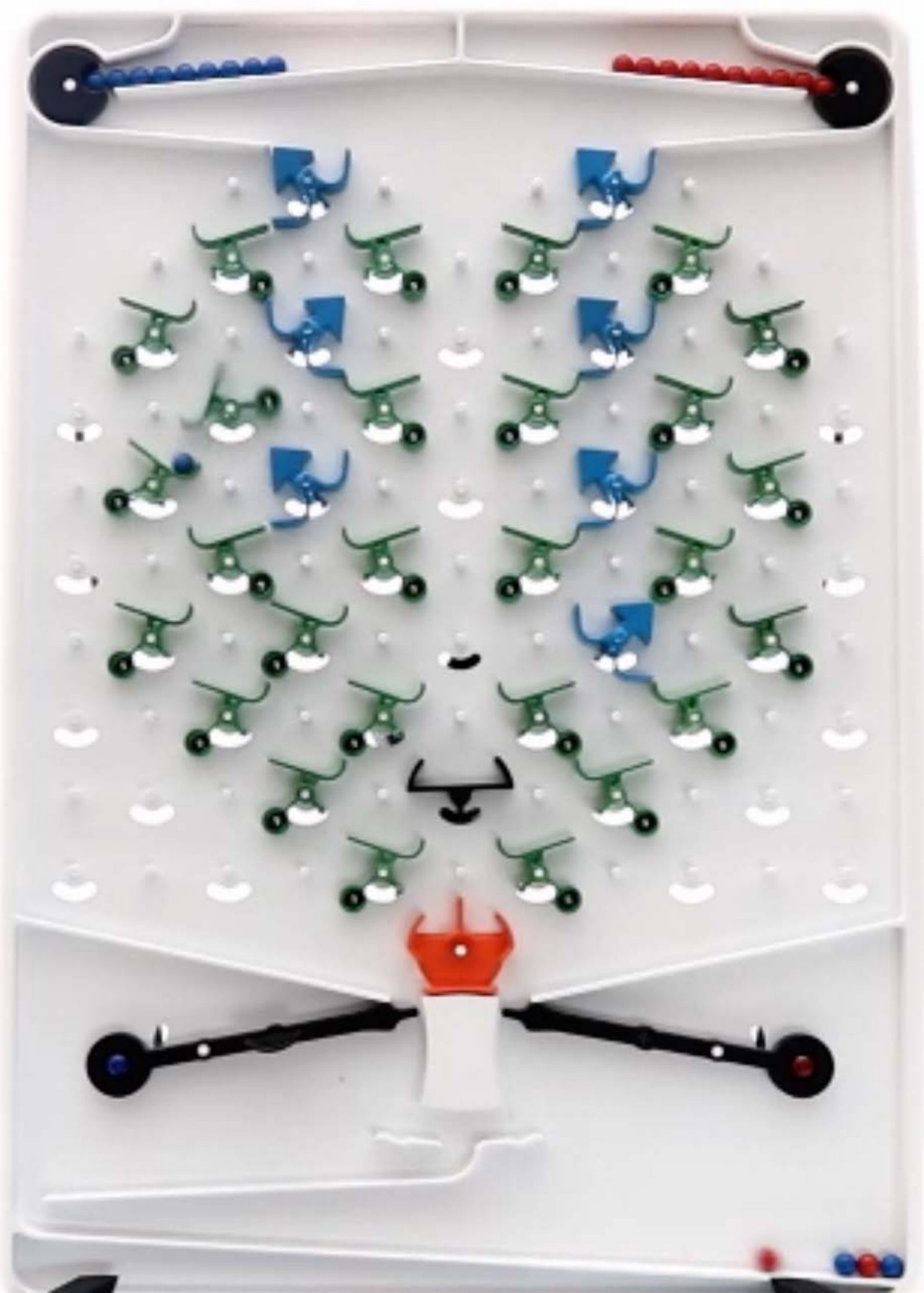}\\
 A Turing Tumble layout that can perform addition.\\
 (Image from 
 {\href 
 {https://turingtumble.com}{turingtumble.com}})
\end{center}
\end{figure*}

The toy is in the spirit of \href{https://en.wikipedia.org/wiki/Digi-Comp_II}{Digi-Comp II}
and {\href{https://en.wikipedia.org/wiki/Dr._Nim}{Dr.Nim}}, but is seemingly much more powerful.  It has won awards from
toymakers and from parents.  Besides all of the children who are
learning about computation via this device, it has a fairly robust
online Turing Tumble Community
({\href{https://community.turingtumble.com}{TTC}}) 
of aficionados who enjoy sharing challenging new puzzles, solutions,
educational tips, new techniques, maker-advice, and proposals for new
pieces. There has been one popular thread focused on conversations
about the power of the toy.
A {\href{https://tinyurl.com/ttsimulator}{simulator}} has
been built that allows experimenting with layouts larger than the
provided board, and use of more (though not different) pieces than are
provided with the toy. The reader unfamiliar with how the device works
is encouraged to read Section~\ref{sec:parts} first, and to experiment with the different pieces using the
simulator, or a purchased game, so that the narrative here will be
easier to follow. We present some involved constructions, and having an
intuition about how the pieces work will be very helpful.

\subsection*{Prior Work}

Most of the work on devices that compute using marbles has focused on
computational complexity as opposed to computability.  Aaronson~\cite{aaronson2014}
discusses, based on complexity assumptions, why the Digicomp-II is not
likely to be ``circuit-universal'', i.e., powerful enough to program any
given circuit C, using sufficiently many pieces and a sufficiently large
board to realize C's computation.  In his blog, he discusses the power
of the Digicomp II, showing that a natural extension is CC-complete (CC
= ``comparator circuit''), placing it somewhere between the complexity
classes NL and P, and showing that, unless CC=P, it is not
circuit-universal.  He discusses the property of being ``1-Lipschitz'',
meaning one input bit change can only affect one output bit - as a
disqualifying feature for a universal computer. Digicomp II has this
property. Turing Tumble does not.

Bickford~\cite{bickford2014} gave a presentation at G4G11 and subsequently made a blog post
in which he discusses marble computations that use
switches only, along with marble runs that connect the switches. The
marble paths appear to allow cycles, thus precluding a gravity-powered
realization.  The switches are essentially the bits of Turing Tumble,
Digicomp II, and of Dr. Nim - 2-state components through which a marble
may be routed, and which toggle the state.  He shows that a
``double-switch'' machine, which has pairs of switches ganged together
so that their states are always equal, has a halting problem that is
PSPACE-Complete. This is done by constructing a polynomial space TM out
of only double-switches and (not necessarily planar nor acyclic)
connections. He does not address an extended model that allows for
arbitrary unbounded computation.

More recently, Johnson~\cite{johnson2019} investigates the complexity of TT itself,
and shows that the TT decision problem (do any marbles ever reach a
particular interceptor) is P-Complete via the Circuit Value Problem, and
PSpace complete when an exponential number of marbles is allowed.
~Again, there is not an extension to universal computation, but instead
to a larger finite board and piece supply.   Johnson also discusses
some models that are only very loosely related to marble computations,
but are of interest, including Demaine et al.'s discussion of
motion-planning devices~\cite{dglr2018}, billiard ball - based computations~\cite{margolus1984},
particle computation models~\cite{bdflm2019}, and tilting-labyrinth type games.

In order to understand previous attempts to show that TT is universal
though, and to understand the contribution of this paper, we must first
understand what that means.\\

{\em What is an acceptable ``unbounded'' extension of the toy?}\\

A TM has an infinite tape.  In contrast, TT has a finite board and
comes with finitely many pieces, so cannot be ``universal'' as it can
only perform computations bounded by a function of the size of the board
and the number of pieces used.  Of course, that is not what is meant by
any type of completeness when discussing finite toys and games. Whether
discussing computability or complexity based completeness, what is
typically investigated are ``reasonable extensions'' of the toy or game
in question, which allow a growing size of board or number of pieces.
~The question then is to what extent these extended models increase in
complexity or power, as the board and number of pieces grow.

In this case, we ask specifically if one adds a ``reasonable''
unbounded or infinite component to Turing Tumble still in the spirit of
the toy, can a TM be simulated, or, equivalently, can any computation
(not just one bounded by a finite parameter) be otherwise achieved? Can
any computation be performed, assuming sufficient playing board space
and pieces are available?

This begs the question of what is ``reasonable'' and ``in the spirit of
the toy''?

As mentioned, requiring a fixed, finite number of TT pieces results in
a device no more powerful than a finite-state machine. However, if we
allow a different finite number of pieces for each input size, as we do
with circuits, then every finite-sized computer can be simulated as
shown in Section~\ref{sec:finite-computer}, and as shown previously with interesting complexity results~\cite{johnson2019}.

It seems that at the least, allowing an infinite board (to ensure that
the toy is not computationally space-bounded, and an infinite number of
pieces (to ensure that its state-space is not bounded) will be required,
and do not seem unnatural.

However, triggers/ball-hopper pairs are somewhat different from the
other pieces, in the following ways:
\begin{itemize}
\item
  {They serve only to (re)start computation}
\item
  {They are not typically ``locally connected'' to adjacent pieces on the board.}
\item
  There is a potentially long-distance physical connection on the back of the board between each trigger and its corresponding ball-hopper.  This
  is so the connecting piece does not interfere with the rest of the
  components on the front of the board. 
\end{itemize}
Consequently, realizing great distances between triggers and
corresponding ball-hoppers, or multiple triggers connected to the same
ball-hopper, or infinite pairs of trigger and ball-hoppers, might be
difficult to implement physically, whereas it is easy to imagine an
extended board with additional pieces of the other types.  Without
triggers, one could simply use a finite sized board and number of
pieces, and add more as needed by the computation.  Triggers though might
 require connecting each new board back to one or more earlier boards.  
Thus, whether it is reasonable to allow an infinite number of
trigger/ball-hopper pairs, or even allow an infinite number of
triggers connected to a single ball-hopper, is perhaps in the eye of the
beholder.  

Our construction of Section~\ref{sec:simulation} provides a simulation of a TM with three ball-hoppers, but with an infinite number of triggers paired
with each hopper.  This is eliminated in Section~\ref{sec:improvements}, where we present an extension using only a single trigger/ball-hopper pair by relying on infinite gear chains for pieces to communicate long distance. We explain how the infinite chains may be eliminated at the expense
of computational time.

Another question that arises in any simulation of a computation is
whether the encoding is doing the work of the computation.  A safeguard
against this situation is to ensure that, except for finitely many
pieces (e.g., those encoding the input), the remaining pieces are
initialized to a simple, finite, repeating pattern, that is independent
of the input.  (For example, a proof that
 {\href{https://en.wikipedia.org/wiki/Rule_110}{Rule
110 for cellular automata}} is Turing-complete requires a repeating
infinite pattern to be initially written in the cells of the cellular
automaton, rather than just an initial finite-length pattern. That
infinite pattern forms the background milieu within which the actual
simulation is carried out.)

Because TT is circuit-universal, and a TM's finite control can be
realized in a simple circuit, a construction to simulate a TM is
possible provided there is a way to connect TT ``circuits'' together into an infinite linear array to simulate a TM tape - with each circuit acting as a tape cell -  with a mechanism for interaction between the adjacent circuits to allow the TM to move right or left by ``activating'' the adjacent circuit, and to carry state information.

This is what T. Yamada (user ``Yama-chan'') presents in a \href{https://community.turingtumble.com/t/proof-of-turing-completeness/372/46}{post on TTC} ~(2018). A TM computation is simulated by an infinite horizontal array of TT boards. Each board has the entire TM finite control encoded on it, and has triggers to ball hoppers to the boards on the left and on the right. There are gear chains between adjacent
boards to carry state variable information. The construction therefore
does have an infinitely repeating pattern in all but finitely many
cells, but that pattern can be as complex as the TM's transition
function.

A paper by Tomita et al.~\cite{tlipyk2019} introduces a formal model of the toy,
called Turing Tumble Model (TTM), and shows how to simulate (a finite structure built from) Reversible
Logic Elements (RLE) in the model. Because chaining such RLEs together in
an infinite one-dimensional array provides a universal model of
computation, they claim that this is sufficient to show that TTM is
computationally universal. However, extending the simulation of RLEs by
TTMs to a simulation of an {\em infinite chain} of RLEs by an {\em infinite chain} of TTMs is not presented.   In private communication the authors indicate that the issues arising from an infinite extension may be adequately dealt with by connecting the TTMs together locally to the left and right via a {\em feedback system} in TTM.   We believe this would translate to trigger/hopper connections between adjacent boards for Turing Tumble.   It appears that this construction is comparable to that proposed by Yamada above  (i.e., the local TTMs carry the transition function and state information,  communication between cells is via gear chains, with head movement simulated by trigger/hopper pairs across adjacent cells.)

In an {\href{https://docs.google.com/document/d/1HmGJwkIMCrnXdrB20eZuiuXYMDz4DddypYoSYGbWLiM}
{earlier version of this paper}  we improve the
results of Yamada and Tomita et al., to use only a single trigger/hopper pair, but still have to
encode the transition function of the TM as part of the infinitely
repeating pattern of tape cells - which we believe is unnatural.

Our initial construction presented in Section~\ref{sec:simulation} uses infinitely many triggers for three ball hoppers, but improves previous results because it encodes the TM just once as opposed to in each of the infinitely many cells of the simulated tape.   A modification described in Section~\ref{sec:improvements}  allows for just a single trigger (and thus only one ball hopper). 

More specifically, the board is infinite vertically and bounded
horizontally.  The only information about a computation that is
encoded is the finite control of the TM (at the top), the finite input,
and an infinite repeating pattern of connected 0 bits representing the
empty rest of tape, along with a handful of other variables.   This
repeating pattern is independent of the TM being simulated, and of the
input.  Thus, the computation itself is not somehow being encoded.  Moreover, the simulation is structured very much like a TM: a finite
control working on a separate infinite read/write tape.

Shortly after the toy was introduced in 2018, an interesting discussion
commenced on the {\href{https://community.turingtumble.com/}{Bboard}}
about what it meant to say the toy was ``Turing-complete'', as was
stated by the manufacturer. Shortly thereafter, 
\href{https://community.turingtumble.com/t/proof-of-turing-completeness/372/29}{a simulation of cellular
automata} by J. Crossen using TT was presented, and subsequently over
the last three years there has been a lot of discussion and additional
constructions shown on the Bboard~\cite{ttc}. Because cellular
automata include those capable of executing \href{https://en.wikipedia.org/wiki/Rule_110}{``Rule
110''} which itself can simulate computation (via a chain of other
reductions), assuming that the cellular automaton TT implementation is
correct, this shows that TT is Turing-complete. 

Unlike the simulations of cellular automata that have been presented,
the constructions presented here have no balls ``falling to infinity'', nor does our main constructions employ multiple balls falling at the same time, which would make reasoning about its behavior more challenging. (However, our stronger result in Section~\ref{sec:improvements} does allow multiple balls to fall at once in a very prescribed way, but we argue that they cannot interfere with each other.)
  
We believe that the previous constructions of Yamada, Tomita et al., and Crossen, probably show that Turing Tumble is Turing-complete, although none have been proven correct nor undergone formal review.  Moreover, none are ideal in that the construction may be incomplete or lack detail, and/or may employ an infinite number of trigger/hopper pairs, 
and/or may encode the TM at every simulated cell.

The  reason for the present work then, besides the primary one of having fun, was to formally present for review a new construction directly simulating a Turing machine with Turing Tumble, and to provide a proof that the construction is correct.

We believe that Turing Tumble may be the first marble-based computing device that is inherently powerful enough to admit a proof of
universality in a naturally extended model of the finite machine.

\section{Preliminaries}

\subsection{Turing Tumble Parts}
\label{sec:parts}

The following shows the parts available in Turing Tumble. 
Also available are \href{https://drive.google.com/drive/folders/11O_oCy-xpZSBxyFfnYTahPYoKvrV7t0n?usp=sharing}{short animated gifs of the parts in action}, courtesy of the folks at 
{\href{https://turingtumble.com}{Turing Tumble}}.


\begin{description}
\item  \raisebox{-.125in}{\includegraphics[scale=.5]{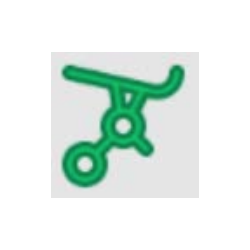}}
 A ramp.  A ball enters from either direction from above, and this moves the ball down and to the right. Its mirror image moves the ball down and to the left.
This is used to move balls down the board by sequencing these into
ramp ``chains'' - either diagonally, or vertically by using
opposite-facing ramps in a column.

\item  \raisebox{-.175in}{ \includegraphics[scale=.6] {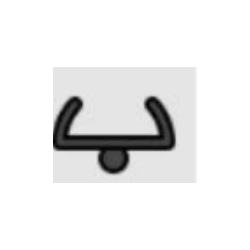}}
 An interceptor catches a ball, ending (perhaps part of) a ``computation''. 

\item  \raisebox{-.125in}{\includegraphics[scale=.5]{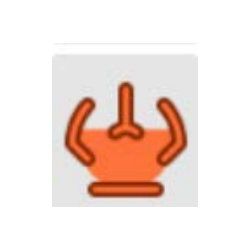} } A crossover. A ball entering from
the upper left exits to the lower right. A ball entering from the upper
right exits to the lower left. This is typically used to allow balls
heading different directions along ramp chains to cross over other ramp
chains, to allow otherwise non-planar layouts to be used on the board. 

\item
 \raisebox{-.125in}{\includegraphics[scale=.5]{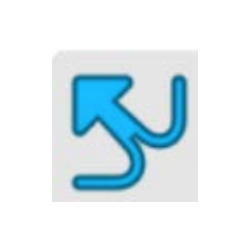} } A bit.   This has two states:
~pointing left (shown) representing ``0'', or pointing right (mirror
image) representing ``1'' (not shown).  If a ball enters from above
while the gear bit is pointing left (has value 0), it flips the bit to
the right (has value 1) and exits to the lower right. Conversely, if a
ball enters while the gear bit is pointing right, it flips the bit to
the left and exits to the lower left.  This is the basic unit of
information storage, and I/O, and can be used to create two branching
paths exiting from one ramp coming in, thus allowing a 0-1 branch
instruction. Usually ramps and a crossover are placed below, so that if
the bit were set to 0, the ball will be routed to the lower left
instead of the lower right, and conversely, if the bit were set to 1,
it will exit to the lower right.

\item  \raisebox{-.125in}{\includegraphics[scale=.5]{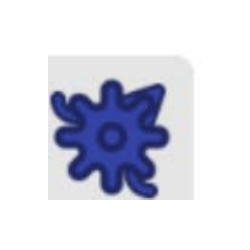}}  A gear bit. These behave like
the standard bit (just above) with a 0 state pointing left, and a 1
state pointing right (shown). Used alone, gear bits behave just like
standard bits, However, they can also behave like gears, and have great
advantage when they are connected to other gear bits via gear pieces
(shown just below).   One of the problems with a standard bit piece is
that when it is read (i.e., when a ball passes by), its state changes.
~By combining gear bits as shown in the next section, we can create
small substructures that allow a read while resetting the bit to its
original orientation.   Or, we can create a
``write'' instruction, that will always exit with the bit pointing in a
desired direction, regardless of how it was pointing initially.

\item \raisebox{-.125in}{ \includegraphics[scale=.5]{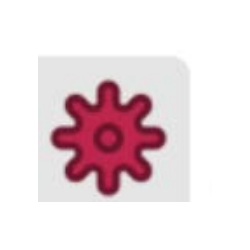} }  A gear. Behaves like a gear.  Used to connect gear bits.

\item \raisebox{-.15in}{\includegraphics[scale=0.45]{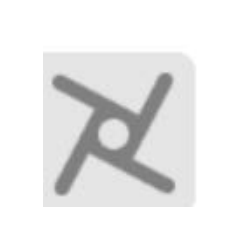} }and
\raisebox{-.1in}{\includegraphics[scale=0.2]{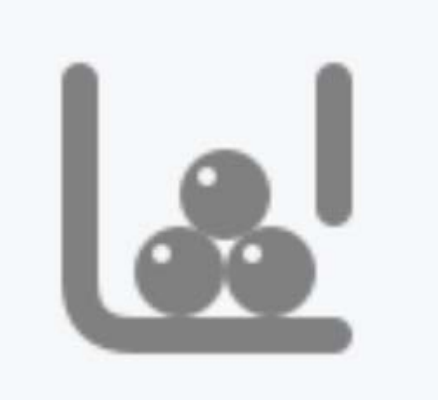}}  A trigger and a ball-hopper. 
When a ball passes through the trigger, another ball is released from
the ball hopper.   In the actual TT toy, there are two triggers at the
bottom of the board, which look like giant levers (not as depicted
here). The different triggers cause balls to be released from one of
two ball hoppers at the top of the board, typically holding blue balls
(left side) or red balls (right side).  The icon shown here is from the
online simulator, which allows the placement of triggers anywhere on the
board.  Designing a real-world piece that could actually be placed
(and removed and replaced) anywhere and that could cause a ball to be
released from the top seems quite difficult because a physical
connection between the piece and the ball release must be created. (The
trigger levers in the TT game connect to the ball release mechanisms via
long permanent connectors behind the board. This way, the connectors do
not interfere with the rest of the pieces placed on the board.)  In
the simulator, multiple triggers can be associated with a single
ball-hopper.
\end{description}

\subsection{Useful Components}

{We rely heavily on the following structures built from gear bits:}

\begin{description}

\item  {\includegraphics[scale=0.25]{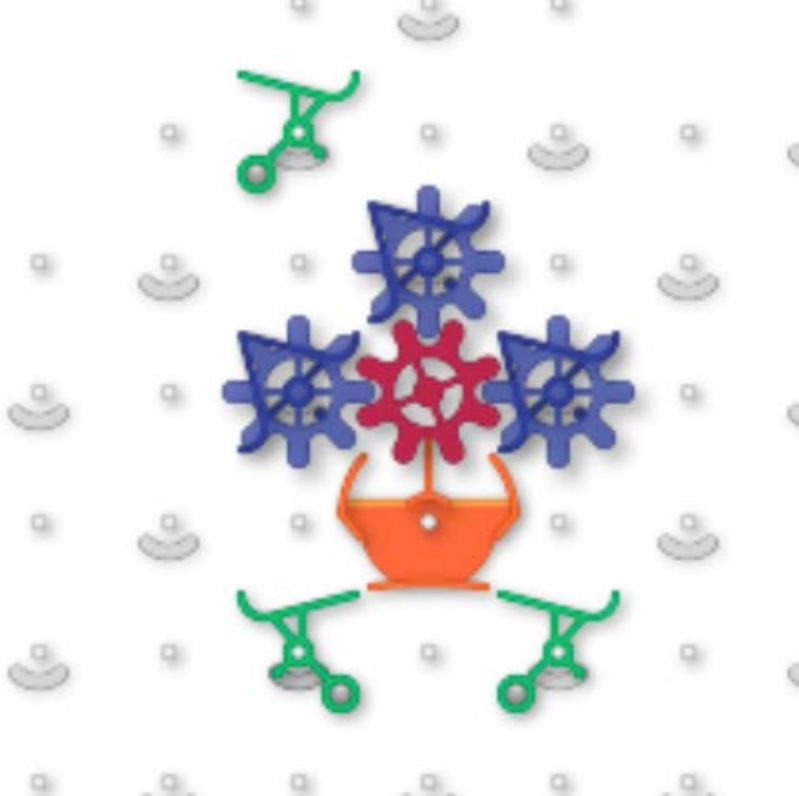}}
  {(}{N}{on-destructive) read.  This construct allows the ball to pass
  through a gear bit, leaving the gear bit's state unchanged, and
  }{exiting}{~either toward the left, indicating that the gear bit was
  set to 0, or toward the right, indicating the gear bit was set to 1.
  The idea is that the three gear bits (purple) are linked together via
  a gear (red). A ball enters the top gear bit and based on the state,
  falls either to the right or left, changing the state (of all three)
  momentarily. Either way it falls, it encounters a second linked gear
  bit, which toggles the state (of all three) back to the original
  state, and then is routed into the crossover' directing the ball to
  the left or right, depending on whether the gear bit was in the 0
  state, or the 1 state, respectively.  The location of the red gear is
  irrelevant, as long as the three gear bits are linked.}

\item {\includegraphics[scale=0.25]{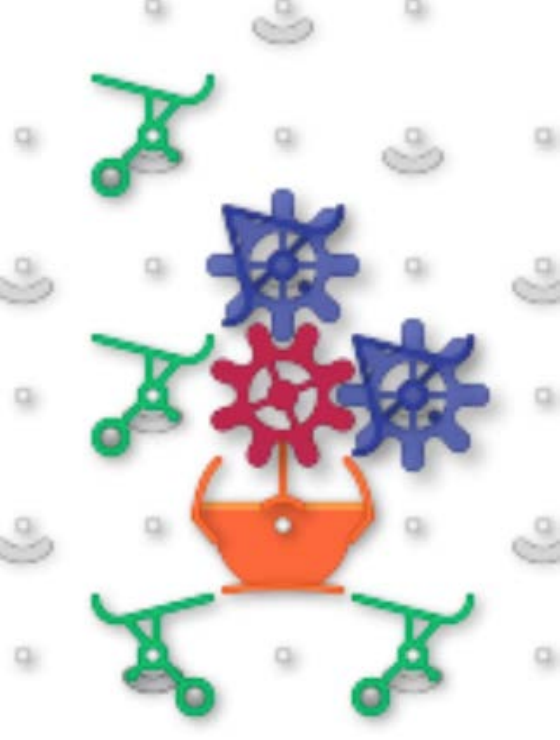}}
  Write-0.  This allows the ball to pass through a gear bit, setting
  it to 0, but exiting to the left if the gear bit was originally 0, and
  to the right if the gear bit was originally 1.  The idea here is
  that if the bit was set to 0, it falls to the right and it is toggled
  twice just as in the nondestructive read, so maintains the value 0.
  But if it was set to 1, then it falls to the left, and toggles the
  bits just once, thus setting the bits to 0.  (Again, the location of
  the red gear is irrelevant, as long as the three gear bits are linked.)

{In cases where a write is required, but no branching, one can simply
eliminate the crossover (orange) and have a single path stretching down
below the write component. }

\item {\includegraphics[scale=0.25]{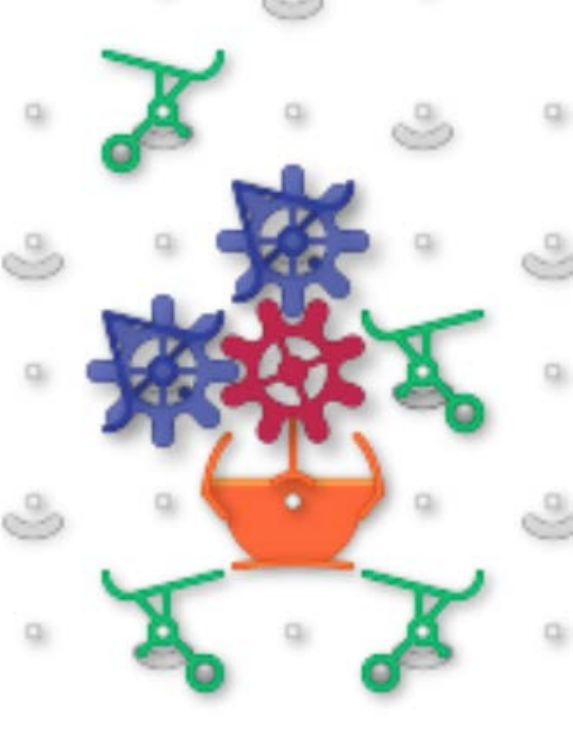}}
  Write-1.  This is the same as Write-0, but the gear bit is set to 1
  after the ball passes through.    As with writing a 0, if no branching is desired, we eliminate the
crossover and have a single exiting path regardless of the value of the
bit initially.

\item {\includegraphics[scale=0.25]{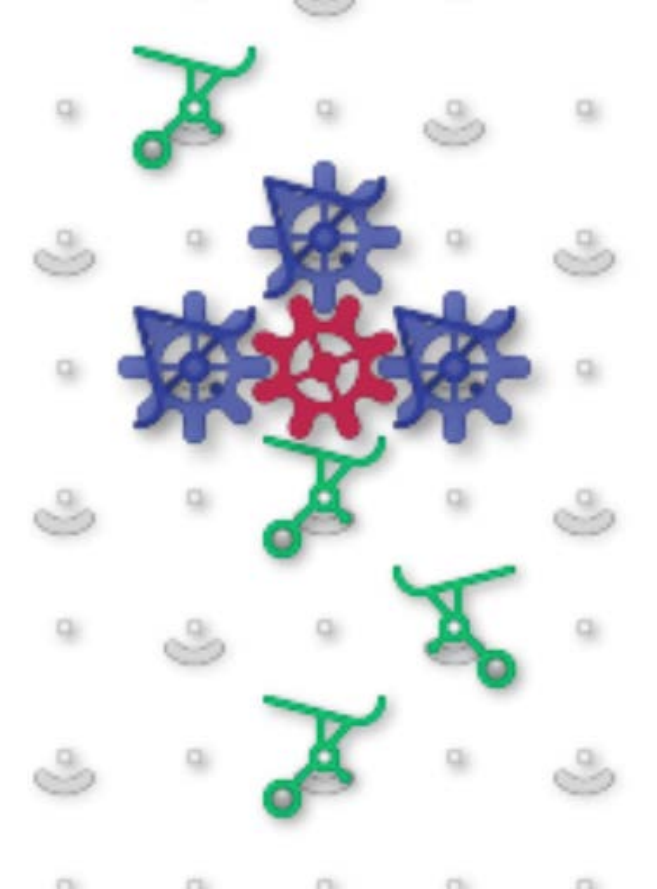}}
  Ignore.  This allows a ball going down a path of ramps (the green
  pieces) to pass through a gear bit along the path, leaving the gear
  bit unchanged, and exiting on a continuation of the path (no
  branching). This is just a nondestructive read, with no branching
  below.  (Again, the location of the red gear is irrelevant, as long
  as the three gear bits are linked.)
\end{description}

\section{Simulating the Computation of any Computer with Finite Resources}
\label{sec:finite-computer}

A typical approach in computing with TT is to represent variables as
bits on the TT board. Multiple bits can be laid out vertically, and a
ball falling down can pass through each bit, and branch off the chain
based on the bit value as just seen, in order to achieve some desired
effect.   Ideally, after passing through $n$ variables, we would
like the ball to be traversing down any one of up to $2^n$~vertical
paths.  However, in order to achieve this, we actually need multiple
copies of each variable, because after the first variable is read, and a
branch is executed, we have two distinct paths upon which the second
variable will need to be tested.

So, instead of representing the memory of the computer with a stack of
$n$ bits, we'll have lots of duplication, and represent it in a binary
tree structure with $n$ levels made up of gear bits. The $k$th level will
have $2^{k-1}$ gear bits, all geared together so that they always show
the same value. Hence, there are effectively only $n$ different functional
gear bits -- one for each level -- and exactly $2^n$ configurations of
these gear bits. (The total number of gear bits used to represent these
$n$ distinct values in a tree is $2^n-1$).  If the gear bits are
packed too closely, then there may be as few as $n+1$ exit points from the
tree, but if they are sufficiently far apart, then we can ensure that
there are exactly $2^n$ distinct exit points, and can also extend the
$2^n$~paths from the bottommost gear bits to ensure that the ball may
arrive at any one of $2^n$ distinct locations each separated horizontally by at
least, say, 10 spots, which will allow subsequent gates to be placed
below each path without interfering with gates placed below other paths.

Figure \ref{fig:branching-tree} below provides an example of a ``Read-Branch'' tree of height
$n=3$, with $2^3=8$ branches extending from the bottom of the tree. Due to
space constraints, we've only shown it with little separation between
the 8 branches at the bottom, and this wouldn't work with the rest of the
constructions needed in this paper. But by making the tree larger and
the levels farther apart (but the same number of levels $n = 3$), we could
have any separating width we'd like, because the width of such a tree
grows linearly with the height. 

Note that the ``reads'' that are done are destructive: the bit flips
as it is read. We could easily add some gears and additional gear bits
to make them non-destructive reads, as in the Preliminaries section just
above.
\begin{figure}[h]
\begin{center}
\includegraphics[width=4in]{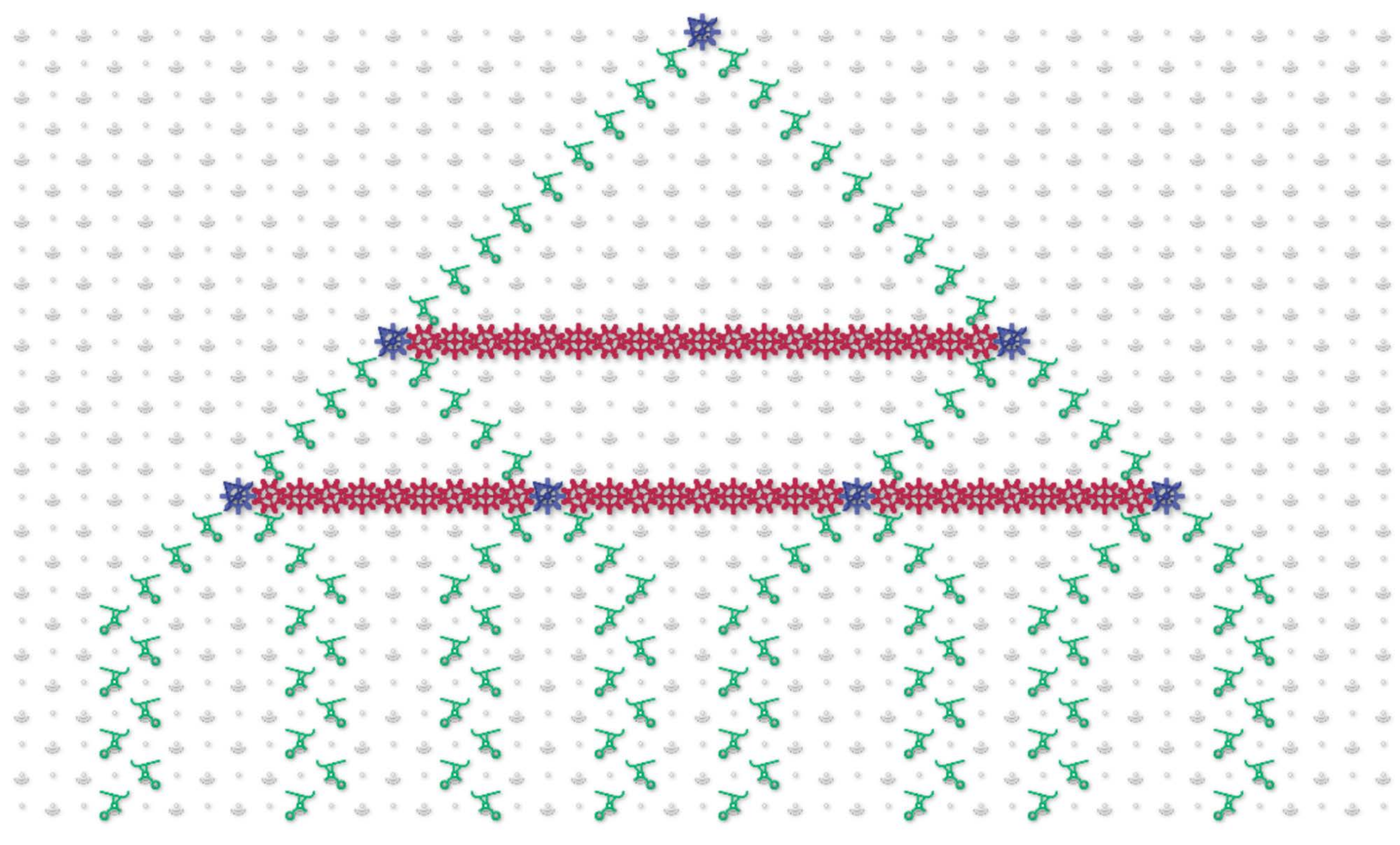}
\end{center}
\caption{A 3-bit branching tree}
 \label{fig:branching-tree}
\end{figure}

Following this reading-and-branching, further along each branch there
can be }{write instructions}{, which are gear bit constructs such as
Write-0 or Write-1, that change the values of the variables already
read, by using a chain of gears stretching back upward to reach the
variable that was read above.  Care must be taken to ensure that this
is done without any chain of gears crossing another, as there is no
mechanism in TT to allow for gear chains to cross without affecting each
other.  Due to the need to avoid crossing, if variables are read and
branched upon in order $x_1, x_2, \ldots{}, x_n,$ then one must write them in
the reverse order $x_n, \ldots{}, x_1$.  This allows for ``nested'' chains
of gears, as seen in Figure~\ref{fig:read-branch-write} below.

\begin{figure}[htp]
\begin{center}
\includegraphics[width=4in]{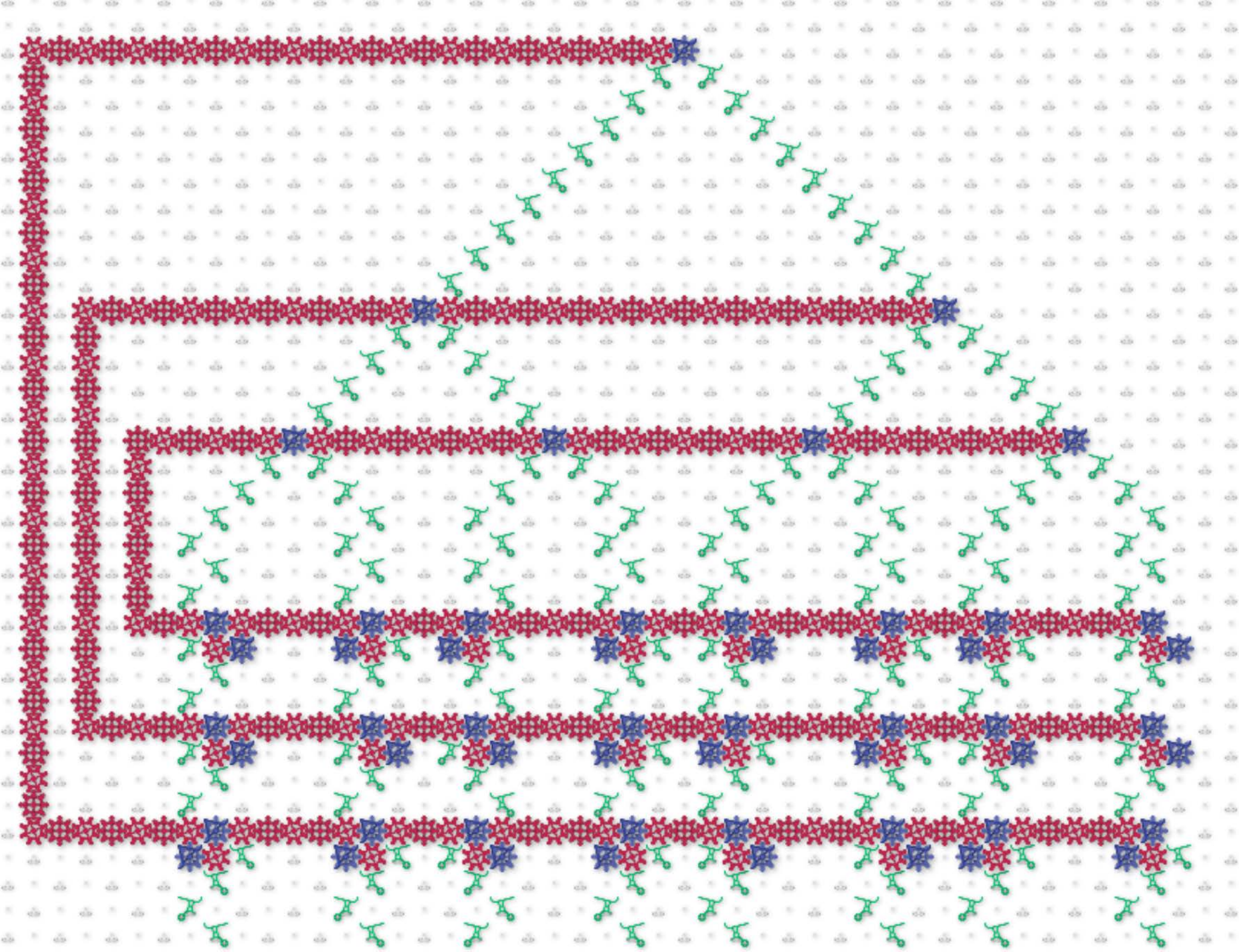}
\end{center}
\caption{Read-Branch-Write with three variables, and three nested
write gear chains.}
 \label{fig:read-branch-write}
\end{figure}

The write-bits of the binary tree are written }{in reverse
order,}{~so that after exiting the tree at the $n$th bit, the first write
will affect the $n$th level of the tree, the next will affect the $(n-1)$st
level, and so on, so that the last write encountered will affect the
very top gear bit in the tree. This forms a collection of nested ``C''
shapes as shown in Figure~\ref{fig:read-branch-write}. 

In Figure~\ref{fig:read-branch-write}, $f(000) = 100$, since if the gear bits in the branching tree are all set to 0 (as they in fact are in the figure), the rightmost
branch is chosen, and then the ball passes through a write-0 to set the
last bit to 0, then another write-0 to set the second bit to 0, and then
a write-1 to set the first bit to 1. 

This ``Read-Branch-Write'' technique is quite powerful, and can easily
be used to show that TT can simulate any finite computer, as follows. 
(The construction and argument in this section originally appeared in a
{\href{https://community.turingtumble.com/t/how-to-simulate-any-computer-with-finitely-many-pieces/429}{post}}
by L. Pitt (July 2018) on the TT Bboard.) 
This construction is
extremely inefficient. Finite computation can be achieved from the much
more efficient reduction of Johnson~\cite{johnson2019}, showing that TT is
P-complete via reduction from the Circuit Value Problem. The
inefficient approach is presented here because the basic ideas of the
read-branch-write loop are used in the next section.

Any computer can be represented via a finite number of bits, giving the
contents of the memory, the registers, the instruction counter, etc.
Suppose there are $n$ such bits total, and that they are stacked into a
single register on the TT board. (So, for example, if the computer had
eight 16-bit registers, we'd have a stack of 8*16 bits at the top
representing their values. Below that might be the finite memory
contents (the program to be executed), the program location counter,
etc. All one tall stack of bits on the board.)

Then after execution of one ``step'' of a program, or one cycle of the
machine, depending on the level of resolution of our view of the
algorithmic process, the bits arrive at a different state. Hence, a
single step of computation on such a computer can be viewed as a
function $f$ from $n$ bits to $n$ bits, and an entire computation is the
result of applying that function repeatedly (i.e., composing the
function with itself sufficiently many times to achieve a fixed point --
until the program has halted and no more changes are occurring).

We can show that for any number of bits $n$, and any computation function
$f$ from $n$ bits to $n$ bits, a finite number of pieces of TT can be arranged
so that a single ball drop will cause the change of a collection of a
stack of $n$ gear bits to reflect the application of function $f$. By
triggering more balls, the device can apply $f$ repeatedly to walk through
the state changes describing the trajectory of the computation.

Consider a computation to be done on $x$, a sequence of $n$ bits
representing the initial state of a computer. As above, we create a
tree of $n$ levels.  Consider }{some particular setting of those bits
corresponding to a snapshot of a computation. Then a ball routed to the
top of the tree will be directed along one of the $2^n$ distinct branches.
From there each branch will continue straight down, and a ball so routed
will pass through $n$ write~components, each of which affects a
possible change in the gear bits in the binary tree as follows: By
choosing either to place a ``write-0'' or a ``write-1'' component in the
branch, the ball can set the bits across a given level of the tree to a 0
or a 1, respectively, via a connected chain of gears. 

Because any sequence we'd like made of write-0 and write-1 components
may be placed along each of the $2^n$ paths, any computation function $f$
can be specified.  And, after one pass of the ball through the tree with
bit pattern $x$ and subsequent exit path corresponding to pattern $x$, and
write bits programmed according to the desired value $f(x)$, by the time
the ball hits the lever below to trigger another ball, the tree pattern
will show $f(x)$ - i.e., one step of computation will have been executed.
Thus, the second ball will apply a second step of computation (computing
$f(f(x))$.  To stop the computation, particular output paths can
terminate with an interceptor instead of continuing on to trigger
another step of computation.

This shows that any computation carried out by a finite computer can be
carried out by Turing Tumble.  It is worth noting that on TTC there
are many (often complex) constructions aimed at simulating a computer in
manners structurally faithful to the components of a computer (i.e.,
registers, program counter, etc.).  The point of our construction was
not to simulate a computer structurally, but rather to demonstrate
easily in a way that can be verified just as easily, that in principle,
TT is powerful enough to carry out any computation parameterized by
bounded input size.

\section{Simulating a Turing Machine}
\label{sec:simulation}

In the previous section we showed how to simulate a finite computer.
~Here we consider the more general question of whether TT is
Turing-Complete, i.e., can any effective computation (as envisioned in
the {\href{https://en.wikipedia.org/wiki/Church-Turing_thesis}
{Church-Turing Thesis}}) be carried out given sufficient space and pieces.  To
establish this, we show how TT can simulate a Turing machine.

\subsection*{Assumption}

For simplicity, we assume a Turing Machine with states $Q_1, \ldots,
Q_n$, with tape symbols only 0 and 1 (i.e., no blank symbol), and one with
a transition function requiring motion to the left or right at each step
(the TM cannot remain on the same cell).  State $Q_1$ is the initial state,
and state $Q_n$ is the unique halt state from which no transitions are
possible.  These assumptions are without loss of generality because any
TM not meeting these requirements can be simulated by one which does,
using simple and well-known techniques.

\subsection*{Running Example}

As we describe the construction, figures will illustrate the technique
on a simple TM with three states that computes the constant 0 function
on unary input.  More specifically, the TM takes a unary input of the
form $01^k$ (the leading 0 is an end-of-tape marker), initially scanning the second tape cell (containing the
first 1), and replaces the 1s with 0s, returning back to scan the second
cell.  It thus computes the constant 0 function: $f(k) = 0$ for all $k$. 
 The TM has the following definition:

\begin{itemize}
\item {States: \{$Q_1, Q_2, Q_3$\}. }
\item {Initial state $Q_1$. }
\item {Halt state $Q_3$. }
\item {Tape alphabet \{0,1\}. }
\item {Transition function $d()$ given by:}
\begin{eqnarray*}
d(Q_1, 1) & = & (Q_1, 1, R)  {\rm ~~Move~right~past~each~1} \\
d(Q_1, 0) & = & (Q_2, 0, L)  {\rm ~~ Read~ past~ 1s,~then~go~left~and~change~to~Q_2}\\
d(Q_2,1) & = & (Q_2, 0, L) {\rm ~~ Move~left~past~1s,~changing~them~to~0s}\\
d(Q_2,0) & = & (Q_3, 0, R) {\rm ~~ Left~ edge~marker~found,~move~right~and~halt}\\
d(Q_3,*) & = & < null > {\rm ~~ Halting~state;~no~transitions~applicable}\\
\end{eqnarray*}
\end{itemize}

\subsection*{The TM Tape}

We simulate the tape cells of the TM with an infinite sequence of
``3-cells'' going down the board. Each 3-cell carries three bits of
information: 

\begin{itemize}
\item
  {Whether or not the head is at this 3-cell (bit ``$h$''). Except when
  it is changing, exactly one 3-cell will have bit $h$ set to 1 at all
  times.}
\item
  {What the symbol is at this 3-cell (bit ``$S$'')}
\item
  {Whether the head is at the previous 3-cell (bit ``$p$''). 
  This bit is connected via a gear chain to the previous 3-cell's copy
  of $h$, and is used to control it.  The first 3-cell does not have
  this bit, as there is no previous 3-cell.}
\end{itemize}

We use three ball colors, which we call ``0'', ``1'', and ``Query''. 
In each 3-cell, there is a trigger for each one of these ball hoppers,
called ``T0'', ``T1'', and ``TQ'', respectively.

\subsection*{The Finite Control}

If  TQ is triggered, the finite control, situated at the top of the
board, must query the tape to determine what the current symbol is.
This is achieved by dropping a Q-ball down a ramp chain, going by each
3-cell, until one with $h=1$ is found, indicating the TM head is at this
position. The bit representing the symbol $S$ at this 3-cell is read, and
depending on whether its value is 0 or 1, a ball is dropped via trigger
T0 or T1, respectively, back into the finite control. Then the Q-ball
is intercepted.  Every other ball dropped will be a Q-ball, with the
main purpose of determining the current symbol read, and releasing the
corresponding 0 or 1 ball. We will see exactly how this is
implemented later.

The finite control has two main components: A State Transition Module
(STM), and a Router.  The STM keeps track of the state of the TM, via $n$
rows of gear bits, with each row connected across with gears.  The $k$th
row represents the state $Q_k$, and exactly one of these rows will be set
to 1 at any time (except perhaps when changing). As in Section~\ref{sec:finite-computer}, the
rows are in order $Q_1, Q_2, \ldots{}, Q_n$, going down the board and are
organized into a read-branch-write tree, using the ``write'' gear bits
~$Q_n', \ldots{}, Q_1'$,  (in that order), connected by C-shaped gear chains so that
changing $Q_i'$ will change $Q_i$, as in Figure~\ref{fig:read-branch-write}. 
However, because only
one of the $Q_i$ will be set to 1 at any time, there need be only $n$ output
branches, instead of $2^n$

We actually use}{~}{two}{~such trees, one below the 0-ball release, and
one below the 1-ball release.  The rows for each $Q_i$ span across both
trees, so all copies of $Q_i$ are connected via gears in any given row. 
~In order to connect the two trees, the nested  C gear chains of the
read-branch-write trees must not interfere, so the tree on the left has
the nested gear chains on the left, and the tree on the right has the
nested gear chains on the right (with the shape of a reversed C).  The end result is that each variable
is in a cycle of gear bits, as in Figure~\ref{fig:STM} below.

\subsection*{Changing the State }

Suppose a $b$-ball ($b$ = 0 or 1) encounters some $Q_i=1$ and follows the
appropriate branch.  Further suppose that $d(Q_i, b) = (Q_j, b', D)$ where
$D$ = L or R representing move-left or move-write.  Then as the ball
follows the branch past $Q_n'$  to $Q_1'$, it sets $Q_j'$ to 1, and all other $Q_i'$
to 0.   Due to the gear chains, this will set $Q_j$ to 1 as the machine
gets ready for the next transition.  Figure~\ref{fig:STM} shows how the states are
updated in the first STM in the simulation of the 3-state TM
defined above.

For the moment, ignore the tape symbol written and the direction of
motion of the head, and focus only on the state transitions. For
example, $d(Q_1,0) = (Q_2, \ldots)$ indicating that if the machine
were in state $Q_1$ and reading a 0, then the next state would be $Q_2$. 
This transition is circled in the figure. 
\begin{figure}[htp]
\begin{center}
\includegraphics[width=4in]{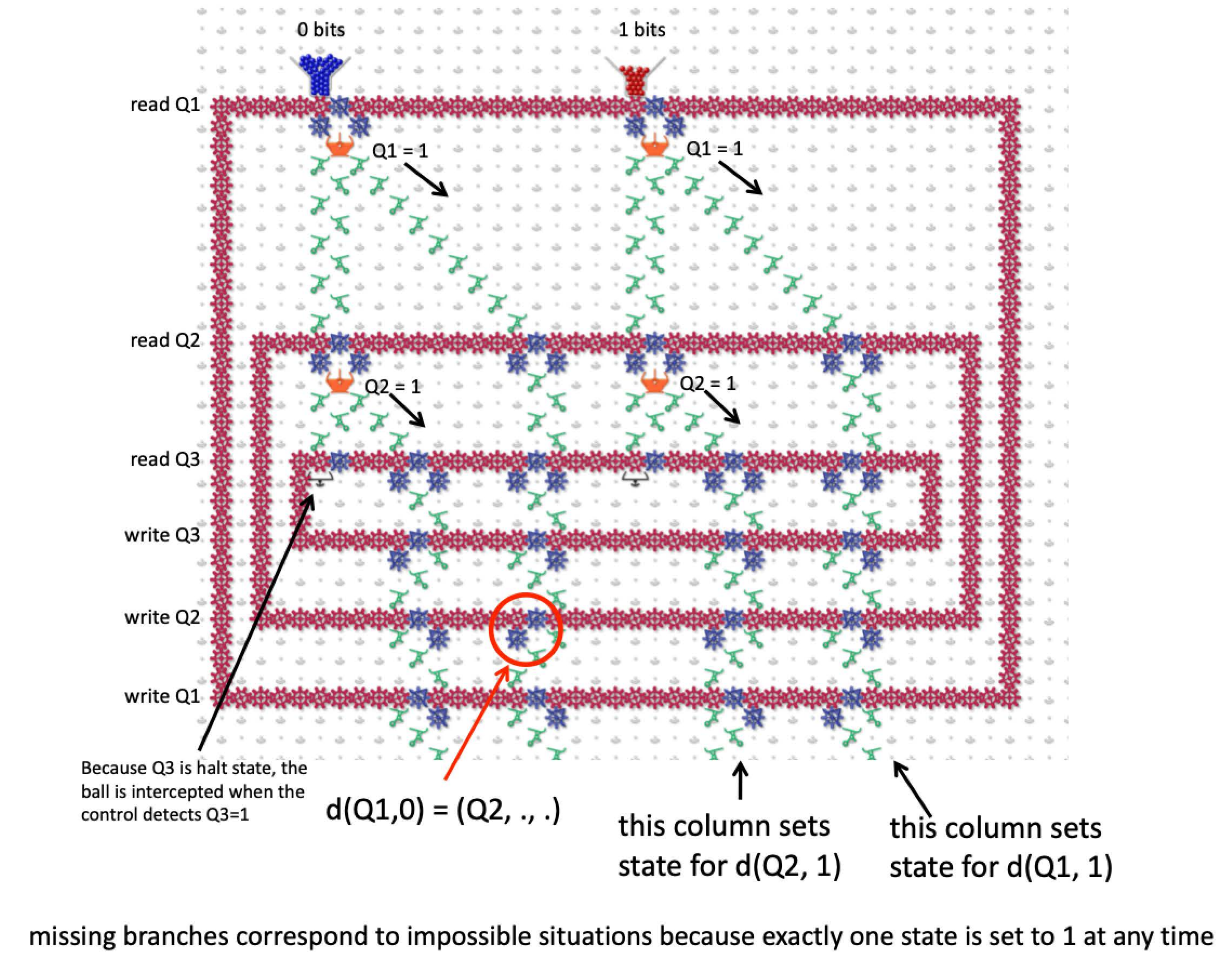}
\end{center}
\caption{State Transition Module (STM) of finite control for a 3-state machine.   The $2(n-1)$ (in this case 4) columns now pass down to router, which sends the ball on one of 4 lines:  L0, L1, R0, R1, indicating whether the TM is moving L or R, and whether to update the scanned symbol to 0 or 1.
}
 \label{fig:STM}
\end{figure}

The STM sets the state for the next step, but we have yet to execute
the current step. Note that the column the ball is in within the STM
implicitly stores that the machine used to be in state $Q_i$, even though
the state variable was updated to show $Q_j$.  Between these two halves
then, we now have a ball emerging from one of $2n$ columns, carrying
information of the form (current state, symbol scanned).  (Actually,
$2n-2$ columns. If $Q_n$ is reached, there is no branch, but there is an
interceptor immediately below it, since it is the halt state.)

For example, if the current state (column) were $Q_i$, and the ball was
in the right half of the STM (indicating a 1-ball had been released),
then we know that the TM is in state $Q_i$ reading the symbol 1.  
Suppose the transition function specifies $d(Q_i, 1) = (Q_j, 0, L)$. The
ball traveling through the state change component has already set the
next state to $Q_j$. How can it indicate that the TM should write the
symbol 0 to the current cell and then move left? We do this by using
the position of the ball on the board. 

There are only four possibilities for what the TM could do at this
point: It could write 0 or 1, and it could move left or right.  Denote
these actions as L0, L1, R0, and R1, with the obvious correspondence.

Using ramps and crossovers in a ``Router'', we can route a ball
emerging from any column of the read-branch-write tree onto any specific
one of four vertical paths we use to represent these four possibilities,
and the ball will travel down the path designated by the transition
function given its current state and symbol read.  In this way the
3-cells below can determine the actions to be taken, based on which of
the paths L0, L1, R0, R1 the ball arrives on.  In other words, a ball
traveling down one of these paths carries the following information to
the 3-cell with $h = 1$, and means the following should occur:

\begin{itemize}
\item  L0: Means that the symbol $S$ being scanned should be set to 0, and
  the TM should move its head to the left
\item  L1: Similar, but set $S$ to 1, and move left
\item  R0: Set $S$ to 0, and move right
\item  R1: Set $S$ to 1, and move right.
\end{itemize}

For example, if the 1-ball goes through its tree, and encounters $Q_2 =
1$, and if $d(Q_2, 1) = (Q_7, 0, R)$, then the ball is routed so that (a)
$Q_7'$ is set to 1, and then (b) the ball is then routed on to the output
line R0.

Figure~\ref{fig:router} below shows the Router portion of the 3-state TM whose STM
was shown in Figure~\ref{fig:STM}.  As a reminder, the transition function $d$ has
definition::\\
\hspace*{.5in}$d(Q_1, 1) = (Q_1, 1, R)$\\
\hspace*{.5in}$d(Q_1, 0) = (Q_2, 0, L)$\\
\hspace*{.5in}$d(Q_2, 1) = (Q_2, 0, L)$\\
\hspace*{.5in}$d(Q_2, 0) = (Q_3, 0, R)$\\
\hspace*{.5in}$d(Q_3, ?) = <null> $

\begin{figure}[htp]
\begin{center}
{\includegraphics[scale=0.15]{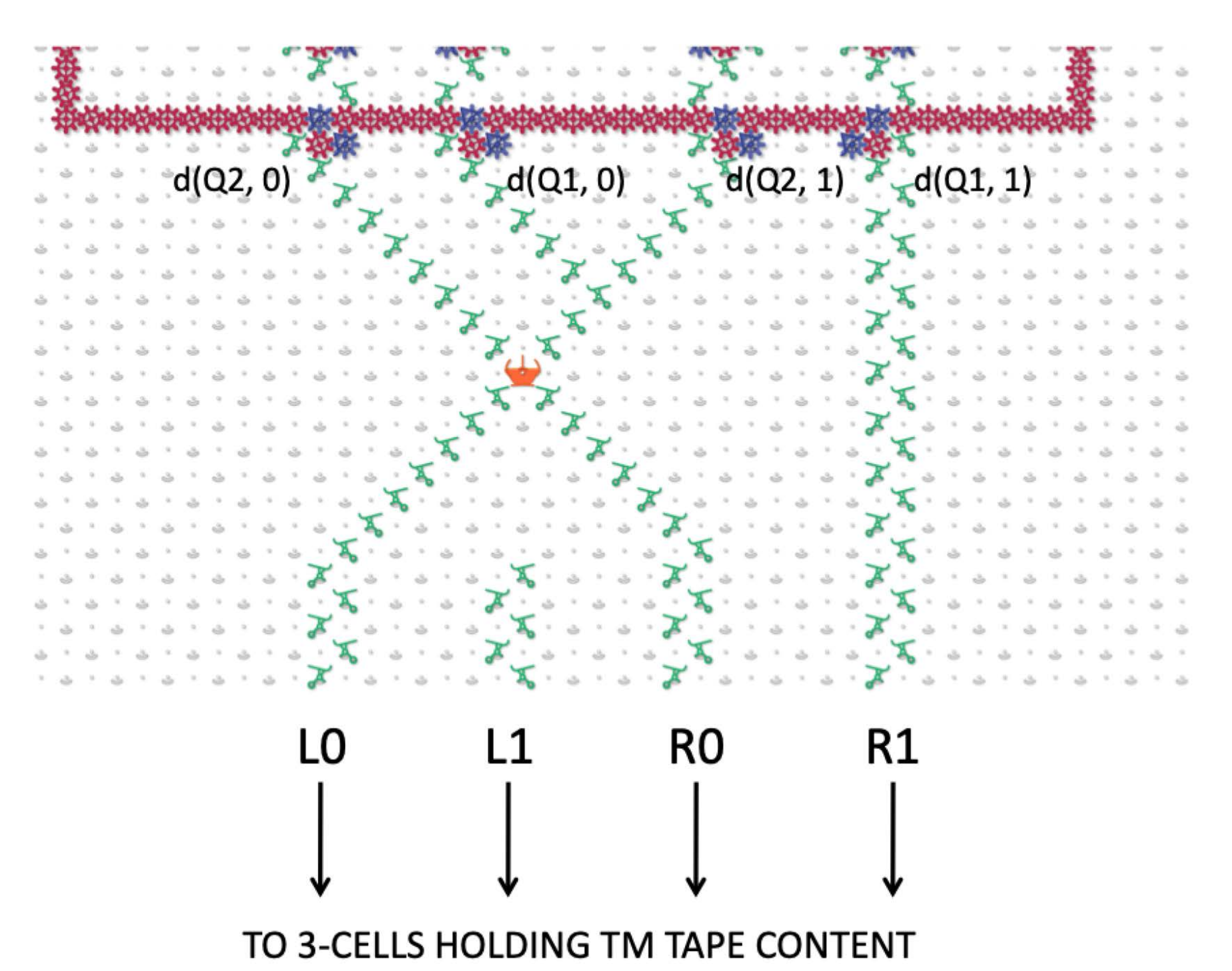}}
\end{center}
\caption{The Router portion of the finite control of a 3-state TM.   No transition dictates that the TM moves
left and outputs 1.}
 \label{fig:router}
\end{figure}

The paths L0, L1, R0, R1, (and Q - not shown in Figure~\ref{fig:router}) will collectively form
the ``Bus'' which runs down the left side of the tape structure forever,
going by bit gears representing the bit indicators $h$ of the different
3-cells, until one which is set to 1 is encountered, indicating the ball
has arrived at the 3-cell representing the scanned symbol on the TM's
tape. Refer to Figure~\ref{fig:bus} below.

\subsection*{Updating the Tape}

We now show how the 3-cell ``tape'' is updated.  At this point, the TM
has transitioned to the correct new state, and a ball is falling down
one of five lines of the Bus: L0, L1, R0, R1, or Q, indicating either
that the finite control is asking for the value of the current cell
(ball on Q line), or delivering the information about the new symbol to
be written and the direction the head should move (ball on L0, L1, R0,
or R1).

As noted above, each 3-cell carries 3 bits of information: $h, S$, and
$p$, representing whether the TM's tape head is at this cell ($h = 1$) or
not ($h = 0$), what the symbol is at this cell ($S = 0$ or 1), and the
value of the variable $h$ in the previous cell ($p = 0$ or 1).  The first
3-cell has a $p$-variable, but it is not attached to anything.  Thus, we
have three rows of gear bits, one for each of $h, S,$ and $p$, in that
order, going down.

The row of gear bits for $h$ cross the Bus.  At the intersection of L0,
L1, R0, or R1 with $h$, if $h$ = 1 then a Write-0 instruction is
executed and the
ball leaves the Bus and heads down to the right and into the 3-cell.
The line Q on the other hand does a{\em  non-destructive} read of $h$, and branches right if $h =
1$.  If $h=0$ the ball continues straight down the Bus towards the next
3-cell.   These right branches create five parallel lines (called L0$'$,
L1$'$, R0$'$, R1$'$, and Q$'$) interleaved with the original ones.  We continue
these new branches diagonally down to the right, so that they cross over
all of the original lines via crossovers, heading toward the 3-cell
containing this $h$. (Figure~\ref{fig:bus}).

\begin{figure}[htp]
\begin{center}
\includegraphics[scale=0.3]{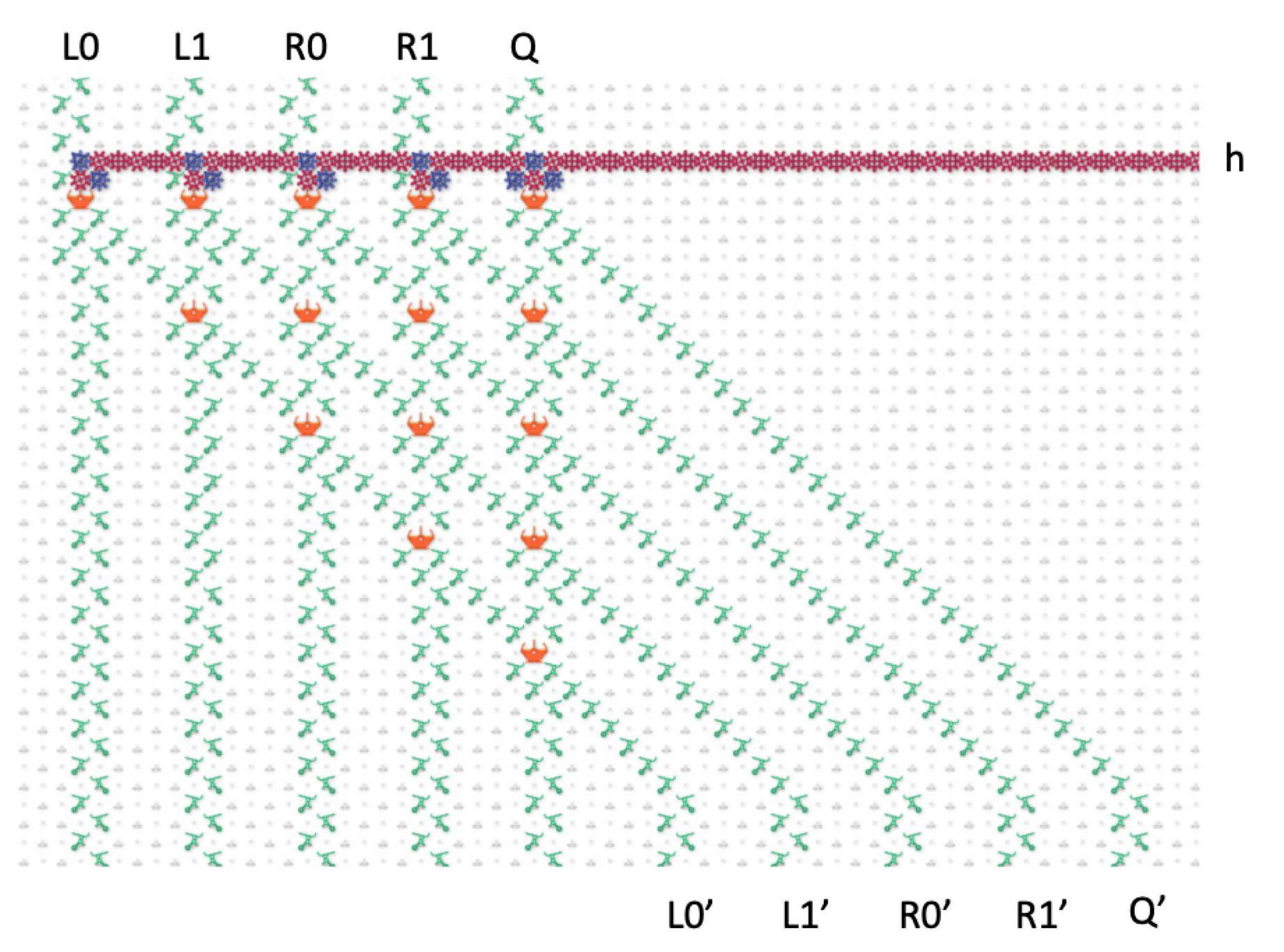}
\end{center}
\caption{Five instruction lines entering a 3-cell because $h=1$.}
 \label{fig:bus}
\end{figure}

These five new lines are used to execute the transition, by writing the
new symbol, and changing the head position.   The original lines
follow the $h=0$ branch down and over any rows of gear bits they may
encounter by passing through the ``ignore'' construct, and then these
lines go on to interact with the $h$ variable for the next 3-cell.

We describe what happens with the new branches 
L0$'$, L1$'$, R0$'$, R1$'$, and Q$'$,
 with the first four indicating the symbol to be written, the
direction to be moved, and the knowledge that $h=1$ (but has
now been set to 0 by the Write-0 instruction) so this is the 3-cell
representing the TM tape cell that is being scanned.  Note that in
these four cases, at the next time step, $h=0$ is correct, since the TM
must move left or right, so will no longer be at the current cell. 

Referring to Figure~\ref{fig:3cell}, the next row of bits encountered is for $S$.  
~If the ball arrived on line Q, then the finite control is querying what
symbol is at this cell. In this case, the ball does a read of $S$,
~branching left ($S=0$) or right ($S=1$), triggers T0 or T1 respectively,
and then is intercepted. Thus, the next ball to fall indicates the
symbol just read. 

\begin{figure}[htp]
\begin{center}
\includegraphics[scale=0.15]{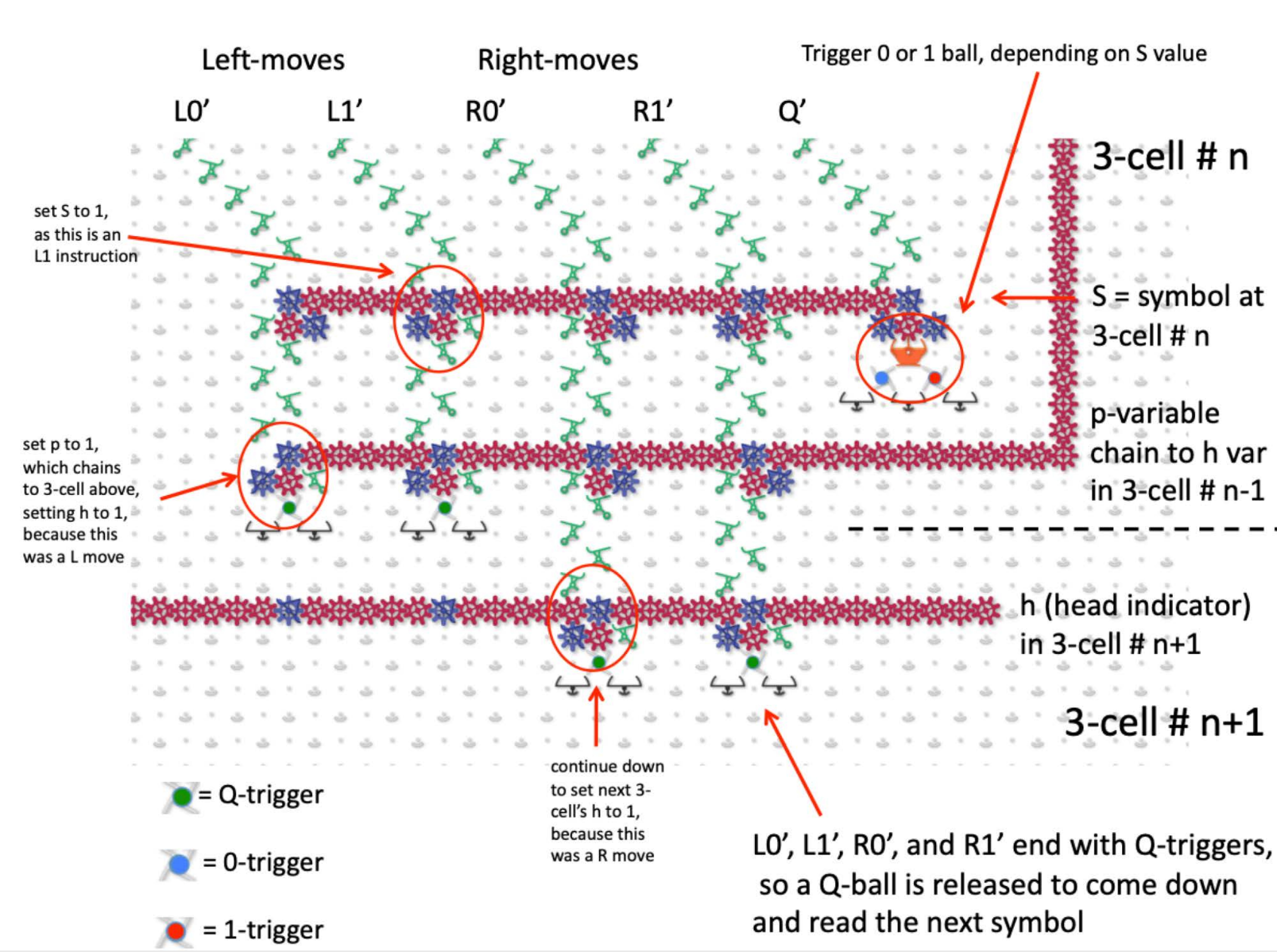}
\end{center}
\caption{Inside a 3-cell. }
 \label{fig:3cell}
 \end{figure}
 
If instead the ball arrived on one of the other four lines, the update
instructions are carried out in the following way. L0$'$ and R0$'$ each
write-0 as they cross $S$.  L1$'$ and R1$'$ each write-1 as they cross $S$.

At this point, with the ball just below the $S$ row, we have that the
symbol $S$ has been updated, $h$ has been set to 0, and all that remains to
be done is to update the position of the head, and trigger a new query
ball.  The next row of gear bits is $p$, which is also connected to the
}{previous}{~3-cell's $h$.  Thus L0$'$ and L1$'$ write-1 as they cross over
$p$, thereby setting the previous 3-cell's copy of $h$ to 1.  They then
pass through a TQ trigger, and are intercepted. This completes the step of
the simulation for these two lines.

On the other hand, a ball on the R0$'$ or R1$'$ lines crosses over $p$,
writing 0 (as the previous cell will not be the next location of the
tape head), continues down to the next 3-cell, encounters the first row
of gear bits (for $h$), writes 1, then triggers TQ and is intercepted. 
This indicates that the head has moved right (down) one cell on the TM
tape, and the corresponding 3-cell now has its $h$-bit indicator set to
1.

That completes the construction, but we must be attentive to\ldots

\subsection*{Topology}

We need to show that the above structure actually can be laid out. 
The figures show much of the layout.  Ramps can cross each other with
crossover pieces (as seen), and ramps can cross gear chains using Ignore
constructs.  However, there is no mechanism for different gear chains
to cross without interfering with their function, so the main possible
concern is that the connection of the $p$ gear bit chains to the previous
$h$ gear bit chains will somehow intersect some other gear chain(s), or
each other.  This can be avoided by sending these back-going gear
chains up alternate sides of the 3-cells and Bus as we now describe.

If the 3-cells are numbered, then extend the $h$-chain of odd numbered
3-cells, and the $p$-chain of even numbered 3-cells, far enough out to the
right to have a clear vertical path between them, and connect the
$p$-chain of the $n$th 3-cell with the $h$-chain of the $n-1$st 3-cell, to form
a ``$p$-$h$'' gear chain.  (The first 3-cell does not have its $p$-chain
connected to anything.)  Also extend to the left (all the way past the
Bus) the $h$-chains of even numbered 3-cells, and the $p$-chains of odd
numbered 3-cells, also connecting the $p$-chains in the $n$th 3-cell to the
$h$-chains in the previous $n-1$st 3-cell, and also forming a $p$-$h$ gear
chain.  Because one is on the left, the other on the right, this
ensures that the two adjacent $p$-$h$ gear chains do not have to cross. 
Note that the $S$-gear chains are short and isolated, and they don't
extend in either direction. Refer to Figure~\ref{fig:topology}.

\begin{figure}[htp]
\begin{center}
\includegraphics[scale=0.3]{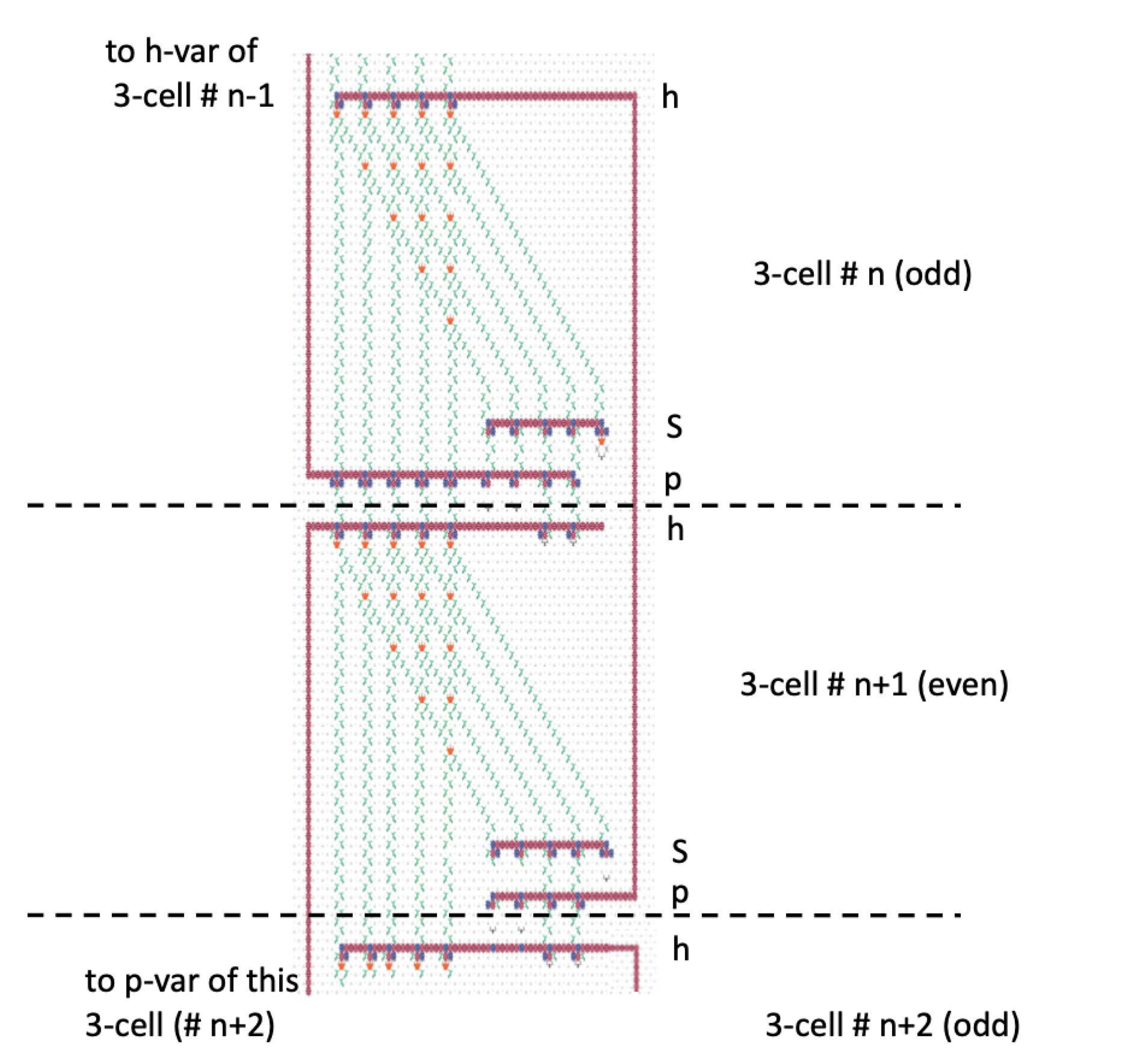}
\end{center}
\caption{~$p$-$h$ gear chains between adjacent 3-cells.}
 \label{fig:topology}
\end{figure}

\subsection*{Putting it All Together}

Figure~\ref{fig:architecture} shows the top part of the entire machine for the 3-state TM
example, with the first two 3-cells included.  A TM with a much larger
number $n$ of states would not look too different; in fact, the only
difference is that the STM would have $n$ loops with $2n-2$ branches
entering the router, which would still route down to the four output
branches L0, L1, R0, and R1.  The 3-cells would be identical.
\begin{figure}[htp]
\begin{center}
\includegraphics[scale=0.45]{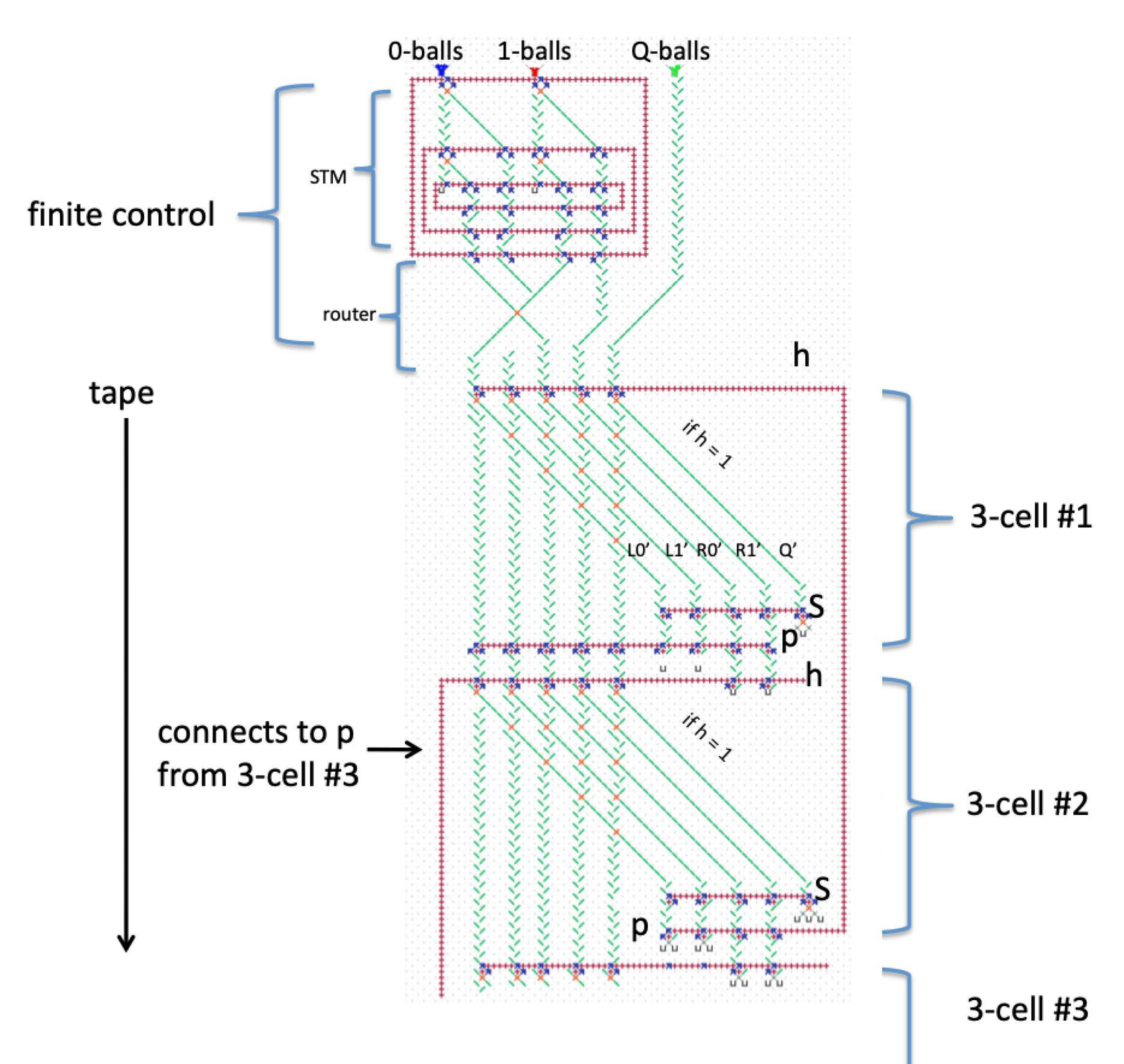}
\end{center}
\caption{Architecture of Turing Machine Simulator using Turing
Tumble.}
 \label{fig:architecture}
\end{figure}

\subsection*{Initial Set-up, and Getting Started}

Initially, we set $Q_1$ to 1, and all other $Q_i$ to 0 in the finite control.
 If the input word has $k$ symbols, we set the occurrences of $S$ in the
first $k$ 3-cells to reflect that word, and all other $S$ values to 0. 
 We set $h$ in the first 3-cell to 1, all other $h$ values to 0.  For all
cells, the value of $p$ is geared to the previous $h$, so they will be
initialized as the $h$'s are..namely $p=1$ in the second 3-cell, and $p$ = 0
in all others. }    {Thus, beyond the $k$th 3-cell, all gear bits are initialized to 0.

To start the simulation, we press the TQ trigger, which will release a
ball down the Q line, and cause the machine to query the first 3-cell's
symbol, and we're up and running.

\subsection*{Halting}

In the STM, if $Q_n$ is ever reached, it encounters an interceptor.

\section{Why Does This Work?}

{We will show that the simulation is correct. We need several
supporting Lemmas.}

\begin{lemma}
\label{lem:oneQ}
Except while a ball is traveling within the STM, there is
exactly one state variable $Q_i$ that has the value 1.
\end{lemma}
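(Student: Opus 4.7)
The plan is to prove this by induction on the number of balls that have passed completely through the STM, showing that between consecutive traversals the ``exactly one $Q_i=1$'' invariant is preserved. The base case is handled by the Initial Set-up subsection, which stipulates that $Q_1$ is set to $1$ and every other $Q_i$ is set to $0$ before the simulation begins.

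For the inductive step, I would assume that exactly one state variable, say $Q_i$, is set to $1$ at the moment a ball enters the top of the STM, and then analyze the ball's path through the read-branch-write tree. The crucial observation is that the $n$ rows of gear bits in the STM form $n$ \emph{separate} gear cycles (one per state variable), so reading-and-flipping $Q_k$ does not disturb $Q_m$ for any $m\neq k$. I would then walk through the descent: when the ball first encounters (a copy of) $Q_k$ for some $k<i$, that bit reads as $0$ (its initial value is undisturbed, by the separation just noted), so the ball branches left and $Q_k$ flips to $1$; when the ball reaches $Q_i$, that bit still reads $1$, so the ball branches right and $Q_i$ flips to $0$; for each $k>i$ encountered thereafter, $Q_k$ still reads $0$, so the ball branches left. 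Hence the ball exits the read-branch portion on a unique column, namely the one labeled by the initial state $Q_i$ (and by the symbol $b$ determined by which of the two trees it is in).

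Next I would invoke the construction of the write portion. By design, along the branch corresponding to $(Q_i,b)$, the ball passes through $Q_n',Q_{n-1}',\dots,Q_1'$, and the placement of Write-0 versus Write-1 components is dictated by the target state $Q_j$ from $d(Q_i,b)=(Q_j,b',D)$: exactly one write component is a Write-1 (on $Q_j'$) and all others are Write-0. Because each $Q_k'$ is geared into the same cycle as the corresponding row $Q_k$, these writes deterministically set $Q_j=1$ and $Q_k=0$ for all $k\neq j$, regardless of the intermediate ``scrambled'' values the rows held during the descent. Thus once the ball leaves the STM (entering the Router), exactly one state variable is set to $1$, completing the induction.

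The main obstacle to watch for is the interaction between the destructive reads in the branch phase and the subsequent writes: one must be confident that the branch actually reached corresponds to the \emph{pre-entry} value of $Q_i$ rather than to some transient configuration caused by flipping. This is why the row-independence of the gear cycles, combined with the fact that no gear chain links rows to one another (only to its own $Q_k'$ via the C-shaped chains of Figure~\ref{fig:STM}), is the linchpin of the argument; I would state this as an explicit sub-claim before stepping through the descent, and cite Figure~\ref{fig:STM} to justify it topologically.
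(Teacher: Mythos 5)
Your proposal is correct and follows essentially the same approach as the paper's (much terser) proof: the invariant holds initially, only the STM touches the $Q_i$, and every write column overwrites all $n$ state variables with exactly one Write-1, so the transient scrambling from destructive reads is irrelevant. Your added detail about row-independence and the descent is sound but is really only needed for the correctness theorem (that the \emph{right} $Q_j$ ends up set), not for this counting lemma.
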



\begin{proof}
Initially, there is one $Q_i$ set to 1.  The only place that
the $Q_i$ are set is in the STM. This occurs as a ball travels down a
column during the write phase of the read-branch-write structure. But
each column in that structure sets all $Q_i$ to 0 except for exactly one. \qed
\end{proof}

\begin{lemma}
\label{lem:oneh}
After each ball is intercepted, there is exactly one $h$ set to
1.
\end{lemma}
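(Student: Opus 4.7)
The plan is to prove the invariant by induction on the number of balls that have been dropped and subsequently intercepted. The base case is the Initial Set-up: the description fixes $h=1$ in the first 3-cell and $h=0$ in all others, so exactly one $h$ equals $1$ before any ball is dropped. For the inductive step, assume that just before some ball drops, a unique 3-cell $C$ has $h=1$; I want to argue that once the triggered ball is intercepted, a unique (possibly different) 3-cell has $h=1$.

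I would dispatch the inductive step by case analysis on which ball color is released, which by Lemma~\ref{lem:oneQ} and the construction of the STM/Router is determined uniquely and puts the ball onto exactly one of the five Bus lines $\mathrm{Q}, \mathrm{L0}, \mathrm{L1}, \mathrm{R0}, \mathrm{R1}$. For the Q-ball case, the behavior at each 3-cell is a \emph{non-destructive} read of $h$ (and, after the branch at $C$, a non-destructive read of $S$) followed by a T0/T1 trigger and interception, so no $h$-bit is altered and the invariant is preserved verbatim. For any of the four action-ball cases, I would walk the ball down the Bus and argue that each cell with $h=0$ passes the ball through via the Ignore construct without modifying $h$, while at $C$ the Write-0 construct on the $h$-row flips $C$'s $h$ from $1$ to $0$ and routes the ball onto the primed line into the 3-cell.

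The remaining bookkeeping is to check exactly one other $h$-bit is flipped from $0$ to $1$. For L0$'$ or L1$'$, the ball writes a $1$ to $p$; by the topology argument the $p$-gear chain of $C$ is ganged to the $h$-gear chain of the previous cell, so that previous $h$ is set to $1$, after which TQ fires and the ball is intercepted. By induction, that previous $h$ was $0$, so the net effect is one $h$ flipped $1\to 0$ (at $C$) and one adjacent $h$ flipped $0\to 1$, keeping the count at exactly one. For R0$'$ or R1$'$, the ball first writes $0$ to $p$ (which by induction is already $0$, so has no net effect on the previous cell's $h$), then continues down to the next 3-cell, writes $1$ to its $h$-bit (again, by induction previously $0$), fires TQ and is intercepted; the same bookkeeping holds.

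The main obstacle is the boundary case of an L-ball arriving at the first 3-cell, whose $p$-bit is explicitly described as not gear-connected to anything: writing $1$ to that $p$ would fail to turn any $h$ on, and the invariant would drop from one to zero. I would handle this by appealing to the standing Assumption on the simulated TM, which, together with the end-of-tape marker convention used in the Initial Set-up, guarantees that the TM never moves left off the tape, so an L-action at the first 3-cell is never requested by the finite control and this degenerate sub-case never arises during a valid simulation.
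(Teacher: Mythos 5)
Your induction and case analysis mirror the paper's proof almost exactly: the same base case from the initial set-up, Q-balls leaving every $h$ untouched via non-destructive reads, and the action balls trading the unique $h=1$ at the scanned 3-cell for a $1$ written into the previous cell's $h$ (through the $p$-chain, for L-moves) or the next cell's $h$ (directly, for R-moves), with the ball intercepted immediately afterward. The one place you diverge is the left-edge boundary case, and there your justification does not match the paper as written: the standing Assumption in Section~\ref{sec:simulation} only normalizes the state set, the tape alphabet, and the no-stay-put requirement --- it says nothing about the TM never moving left off the tape, and the end-of-tape marker is a feature of the running example, not a general convention. Indeed, the paper explicitly contemplates left-edge crashes both in its own proof of this lemma and in the proof of Corollary~\ref{cor:simulates}, where the simulation is allowed to reach a configuration with \emph{no} $h$ set to $1$. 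The paper's resolution is the opposite of yours: it concedes that in the crashing case the invariant genuinely fails (the first 3-cell's $p$ is connected to nothing, so the count of $h$-bits drops to zero) and argues this is acceptable because the TM's computation is undefined at that point and there should be no next head position. Your bookkeeping is otherwise correct, so to close this you should either explicitly add the never-moves-off-the-left-edge normalization to the TM (a legitimate without-loss-of-generality reduction, but one the paper does not make), or weaken the conclusion as the paper implicitly does, asserting the invariant only for balls intercepted before the simulated computation crashes off the left edge.
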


\begin{proof}  At the beginning of the simulation, there is exactly one
value of $h$ set to 1 (in the first 3-cell).  Each Q-ball doesn't change
any $h$, so this property remains true when Q-balls are intercepted. 
~Where is $h$ set to 0? If $h=1$, it is set to 0 while the ball branches
onto one of L0$'$, L1$'$, R0$'$, R1$'$.  Suppose the ball is on L0$'$ or L1$'$.
These write 1 to $p$, thus changing the previous $h$ to 1, and the ball is
then immediately intercepted.  So, after such a ball is intercepted,
there is still exactly one $h$ set to 1.  (The only problem is that if
the ball is in the first 3-cell, and along the line L0$'$ or L1$'$, the
instruction is to go left. But the first 3-cell's copy of $p$ does not connect to any $h$.  But, it shouldn't, because the TM is at the leftmost cell.
Thus, the TM is experiencing a crashing computation, trying to move
left off of the edge of the tape, and the computation halts by
convention.  There should be no next head position.)

On the other hand, if the ball is on R0$'$ or R1$'$, then the $h$ variable in
the next 3-cell is set to 1, and the ball is then intercepted (after a
Q-trigger). Thus, again there is exactly one $h$ value set to 1.
\qed
\end{proof}

\begin{lemma}
\label{lem:alternating}
Every odd ball released is a Q-ball. Every even ball released is either a 0 or 1 ball. 
\end{lemma}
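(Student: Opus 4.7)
The plan is to prove this by induction on the ball number $m$, with the induction hypothesis that ball $m$ is a Q-ball when $m$ is odd and a 0-ball or 1-ball when $m$ is even. The base case is immediate: according to the ``Initial Set-up, and Getting Started'' paragraph, the simulation is started by manually pressing the TQ trigger, so ball 1 is a Q-ball. The inductive step splits into two claims that need to be verified: (a) after a Q-ball is intercepted, exactly one subsequent ball is released and it is a 0- or 1-ball; (b) after a 0/1-ball is intercepted, exactly one subsequent ball is released (unless the TM halts or crashes) and it is a Q-ball.

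For (a), I would trace the Q-ball down the Q line of the Bus. By Lemma \ref{lem:oneh} there is exactly one 3-cell with $h=1$; at every other 3-cell the Q-ball merely passes through ignore-constructs on the $h$, $S$, and $p$ rows and continues down the Bus. At the unique cell with $h=1$, a non-destructive read of $h$ diverts the Q-ball onto the Q$'$ line, where it performs a non-destructive read of $S$, branches left or right, triggers exactly one of T0 or T1, and is intercepted. Hence precisely one 0- or 1-ball is released next.

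For (b), I would trace a 0/1-ball from its hopper through the STM. By Lemma \ref{lem:oneQ} exactly one $Q_i$ equals 1, so the ball follows a unique column through the read-branch-write tree, reaches the Router, and is routed onto exactly one of L0, L1, R0, R1. Descending the Bus, by Lemma \ref{lem:oneh} it meets exactly one 3-cell with $h=1$, where it branches onto the primed instruction line, updates $S$, updates $p$ (and either the previous or next cell's $h$), and then passes through a single TQ trigger before being intercepted. Therefore exactly one Q-ball is released next. The degenerate cases (ball intercepted at $Q_n$ in the STM, or an L-ball at the first 3-cell whose $p$ chain is unconnected) legitimately terminate the sequence, and the lemma should be read with the convention that the alternation describes the sequence only up to halting.

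The main obstacle is not any single mathematical step but a careful appeal to the topology: one has to check that along the entire trajectory of each ball, no trigger is inadvertently activated a second time (e.g.\ as the ball crosses other lines of the Bus) and that the intended trigger is indeed reached before interception. This is really a bookkeeping argument over the figures in Section~\ref{sec:simulation} combined with the ``one-ball-in-flight'' consequences of Lemmas \ref{lem:oneQ} and \ref{lem:oneh}, and it is where the proof has to be most explicit about which components a ball does and does not touch on its way to the unique $h=1$ cell.
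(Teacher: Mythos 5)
Your proposal is correct and follows essentially the same argument as the paper: induction on the ball index, using Lemma~\ref{lem:oneh} to show each Q-ball reaches the unique $h=1$ cell and fires exactly one of T0/T1, and Lemma~\ref{lem:oneQ} to show each 0/1-ball is routed through the STM onto a unique primed line ending in a TQ trigger. Your explicit handling of the degenerate cases (halting at $Q_n$, left-edge crash) and the remark about verifying no trigger is hit twice are slightly more careful than the paper's version but do not change the approach.
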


\begin{proof} The proof is by induction. The first ball is a Q-ball,
because the simulation starts with the press of the Q-trigger.  Suppose
that for all even and odd values $k$ up to and including $k = 2n-1$, the
Lemma is true.  We will show it remains true for $k = 2n$ and $k=2n+1.$
Since the Lemma is true up through $k = 2n-1$, we know that the $2n-1$st ball released is a Q-ball.  This Q-ball will find the unique $h=1$ (Lemma~\ref{lem:oneh}), branch, and trigger T0 or T1 after
reading $S$. Thus, the $2n$th ball is a 0-ball or a 1-ball.

Now consider this 0-ball or 1-ball as it goes through the
finite control.  Exactly one $Q_i$ is set to 1 (Lemma~\ref{lem:oneQ}). 
If this is $Q_n$, then the simulation halts so the Lemma is true vacuously, there being no additional even- or odd-balls.  If it is not $Q_n$, then the ball
must be routed down one of L0, L1, R0, or R1. Exactly one $h$ is set to 1
~(Lemma~\ref{lem:oneh}), so the ball when encountering that 3-cell, will branch onto L0$'$, L1$'$, R0$'$, or  R1$'$.  Each of those lines terminate with a TQ and an interceptor, so the next (the $2n+1$st) ball is a Q-ball.  This completes the inductive step, and the proof.
\qed
\end{proof}

\begin{theorem}
\label{thnm:correct}
For all $n$, after the $2n$th ball is intercepted, the TT
structure accurately reflects $n$ steps of the computation of the TM with
the given input.  More specifically, in the TT structure, 
\begin{enumerate}
\item[(a)]
  {The unique $Q_i$ set to 1 is the state that the TM would be in after $n$
  steps of computation.}
\item[(b)]
  {The 3-cell that contains the unique $h$ set to 1 is the one that
  corresponds to the cell being scanned by the TM after its $n$th step of
  its computation.}{~ }
\item[(c)]
  {The settings of the $S$ variables in all 3-cells accurately reflect the
  contents of the TM tape after $n$ steps of computation.}
\end{enumerate}
\end{theorem}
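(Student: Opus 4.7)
The plan is to prove Theorem~\ref{thnm:correct} by induction on $n$, using Lemmas~\ref{lem:oneQ}, \ref{lem:oneh}, and \ref{lem:alternating} as the structural scaffolding so that each ball can only do one specific thing at each junction.

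For the base case, I would take $n=0$ (interpreting the ``0th interception'' as the initial configuration before the TQ trigger is pressed). The \emph{Initial Set-up} paragraph explicitly states that $Q_1$ is the unique state set to 1, the first 3-cell holds the unique $h=1$, and the $S$ variables in the first $k$ 3-cells encode the input with the rest being 0. These are exactly conditions (a)--(c) for the TM configuration at step 0.

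For the inductive step, assume the theorem holds after ball $2n$. By Lemma~\ref{lem:alternating} the $(2n{+}1)$st ball is a Q-ball and the $(2n{+}2)$nd ball is a 0- or 1-ball; I would analyze them in turn. The Q-ball travels down the Q-line of the Bus, passing $h=0$ 3-cells via Ignore constructs (so tape contents are unaffected), until it reaches the unique 3-cell with $h=1$ guaranteed by Lemma~\ref{lem:oneh}; by the inductive hypothesis this is the cell scanned by the TM after step $n$. Branching off onto Q$'$, the ball non-destructively reads $S$, fires T0 or T1 according to $S$, and is intercepted, leaving all gear bits on the tape unchanged. Hence the $(2n{+}2)$nd ball is a $b$-ball where $b$ is exactly the symbol the TM scans at step $n$.

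Next I would trace this $b$-ball. It enters the $b$-side read-branch-write tree of the STM; by Lemma~\ref{lem:oneQ} and the inductive hypothesis it branches through the column corresponding to the current state $Q_i$. Writing $d(Q_i,b)=(Q_j,b',D)$, the write chain along this column sets $Q_j'=1$ and all other $Q_\ell'=0$, so via the gear cycles described in the STM section the unique $Q_\ell=1$ becomes $Q_j$, establishing (a) for step $n+1$. The Router then, by construction, directs the ball onto the line $Db'\in\{\text{L0, L1, R0, R1}\}$ matching the transition. Descending the Bus, the ball passes $h=0$ 3-cells untouched (Ignore constructs) until it reaches the unique 3-cell with $h=1$ (IH + Lemma~\ref{lem:oneh}); there it executes Write-0 on $h$ and branches onto the primed line. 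Crossing $S$, it performs Write-$b'$, giving (c). Finally, on L0$'$/L1$'$ it writes 1 to $p$, which via the $p$-$h$ gear chain flips the previous cell's $h$ to 1; on R0$'$/R1$'$ it writes 0 to $p$ and continues down to write 1 on the next 3-cell's $h$. In all four cases exactly one new $h$ bit is 1, at precisely the cell where the TM head lies after step $n+1$, establishing (b). The ball then triggers TQ and is intercepted, so ball $2n+2$ leaves the structure in the claimed state.

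The main obstacle I anticipate is simply the bookkeeping of the four $D b'$ cases while simultaneously verifying that no stray interaction occurs: I must confirm that no other $S$-variable is touched (relying on the isolation of $S$ gear chains described in the Topology subsection), that the disjointness of the $p$-$h$ gear chains on alternating sides prevents cross-updates, and that the left-edge crash case ($Dj=$L at the first 3-cell) is consistent with the TM halting convention as already noted in the proof of Lemma~\ref{lem:oneh}. Everything else is a direct appeal to the three lemmas and the component-by-component specification of the construction.
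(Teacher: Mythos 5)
Your proposal is correct and takes essentially the same route as the paper's own proof: an induction that traces the odd-numbered Q-ball to the unique $h=1$ cell to establish that the following $b$-ball carries the correct scanned symbol, then traces that $b$-ball through the STM, Router, and Bus to verify (a), (b), and (c), splitting on the four $Db'$ cases exactly as the paper does. The only differences are cosmetic (your induction steps from ball $2n$ to ball $2n{+}2$ rather than from $2n{-}2$ to $2n$, and you flag the gear-chain non-interference checks as anticipated work where the paper appeals to the Topology subsection).
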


\begin{proof}
By induction. This is initially true, when $n=0$.  Assume the
Theorem is true for all values of $k$ less than $n$.  Consider $k = n$, and
the $2n$th ball.  By Lemma~\ref{lem:alternating}, the $2n$th ball is a 0-ball or a 1-ball. We
will show that it is the correct ball (0-ball or 1-ball) for where $h$ is
at the $2n-1$st step of the computation. That is, it correctly reflects
the symbol at the currently scanned cell, just before the $2n$th ball is
intercepted.

By our inductive assumption, after the $2n-2$nd step, $n-1$ steps of the
computation have been correctly simulated. Suppose that at that point,
the TM is in state $Q_i$, is scanning the $x$th tape cell, and that cell
contains the symbol $b$ (= 0 or 1).  Thus, the variable $Q_i$ is set to 1,
the value of $h$ at the $x$th 3-cell is set to 1, and the value of $S$ at that
3-cell is $b$.

Suppose that $d(Q_i, b) = (Q_j, b', D)$.

Now the $2n-1$st ball is a Q-ball (by Lemma~\ref{lem:alternating}), which will ``query'' the
tape contents by sending a ball down the line Q.  It will encounter $h=1$
at the $x$th 3-cell, branch off to the right onto line $Q'$, read the symbol
$S=b$,  and roll through the correct trigger $Tb$ releasing a $b$-ball ($b = 0$
or 1)

Thus, the $2n$th ball enters onto the correct side of the STM -- in the
left half if $b = 0$, or the right half if $b = 1$.  Since $Q_i$ = 1 in the
TT structure, and the side of the STM corresponds to the symbol $b = 0$
or 1 being read by the TM after $n-1$ steps of the simulation, the $2n$th
ball will correctly set the state to $Q_j$ as it goes through the STM,
satisfying (a).  Furthermore, the ball will then enter the Router,
which will send the ball down line L0, L1, R0, or R1, by construction,
depending on the values of $D$ and $b'$ (in other words, it will send the
ball down the line $Db'$).

Recall that $h = 1$ in the $x$th 3-cell, representing the $x$th tape cell of
the TM, being scanned after $n-1$ steps of computation ($2n-2$ balls of the
TT).   The $2n-1$st ball, a Q-ball, did not change the $h$ bit. Thus, as
the $2n$th ball falls down the correct one of the L0, L1, R0, or R1 ramps,
~the $h=1$ that it encounters will still be in the $x$th 3-cell.  Note that
$h$ is set to 0 as the ball branches on $h$.

{If the ramp that the $2n$th ball was on was L0 or L1, then by
construction it sets $S$ to 0 or 1 respectively, and then sets $p=1$ in the
$x$th 3-cell. Thus, after the $2n$th ball passes, the tape contents in the
cell are correctly updated. Further, sInce $p$ is connected to $h$ in the
$x-1$st 3-cell, the structure now represents that the TM is scanning the
$x-1$st 3-cell, thus (b) and (c) are satisfied.  }

If on the other hand the $2n$th ball was on R0 or R1, then it too sets $S$
correctly (satisfying (c)), but goes on to the $x+1$st 3-cell and sets $h$
to 1, after which it triggers TQ and is intercepted.  Since the TM
instruction was to move right, the $h=1$ in the $x+1$st 3-cell accurately
reflects that the TM is now scanning cell $x+1$ of its tape. Hence, (b)
is satisfied. 

Since (a), (b), and (c) hold after the $2n$th ball is intercepted, This
completes the induction, and the proof of the theorem.
\qed
\end{proof}

\begin{corollary} 
\label{cor:simulates}
The TT construction computes whatever is computed by the
TM being simulated.
\end{corollary}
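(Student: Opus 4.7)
The plan is to derive the corollary almost entirely from Theorem~\ref{thnm:correct}; the remaining content is to pin down what ``computes'' means for the TT and to tie the theorem's per-step correspondence to halting and to output extraction.

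First, I would fix a precise notion of what it means for the TT to compute. The natural choice is: the TT halts when no further ball is ever released, and its output is the sequence of $S$-bits across the 3-cells at that moment. With this definition, the corollary reduces to two claims: (i) the TT halts exactly when the simulated TM halts, and (ii) at the moment of halting, the $S$-bits equal the final TM tape contents. For non-halting TMs there is nothing more to prove beyond Theorem~\ref{thnm:correct}, which already gives per-step fidelity forever.

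For (i), I would argue forward and backward in lockstep with the theorem. Suppose the TM halts after exactly $n$ steps, i.e., enters $Q_n$ on step $n$. By Theorem~\ref{thnm:correct}(a), after the $2n$th ball is intercepted, the unique $Q_i$ set to $1$ in the STM is $Q_n$. By Lemma~\ref{lem:alternating}, the $(2n+1)$st ball is a Q-ball, which triggers a 0- or 1-ball; that $(2n+2)$nd ball enters the STM, descends the $Q_n$ column of the read-branch tree, and meets the interceptor placed immediately below $Q_n$ before reaching the router or any TQ trigger. Thus no further ball is released, so the TT halts within at most two balls of the TM halting. Conversely, so long as the TM has not yet entered $Q_n$, the inductive step in the proof of Lemma~\ref{lem:alternating} shows that every 0/1-ball is routed through L0, L1, R0, or R1 and ultimately pulls a TQ trigger, so another Q-ball is released; hence the TT runs forever whenever the TM does.

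For (ii), observe that once $Q_n$ is reached, the interceptor catches the intruding 0/1-ball above the router, so no L/R line is ever traversed after halting and in particular no $S$, $h$, or $p$ bit is disturbed. Applying Theorem~\ref{thnm:correct}(c) at the halting step $n$ therefore yields that the $S$-bits of the 3-cells at the moment the TT halts coincide with the final TM tape, completing the simulation. The only subtle point, and really the only thing that needs explicit checking, is that the interceptor below $Q_n$ fires before any writes or TQ triggers are reached; this is immediate from the STM layout described in Section~\ref{sec:simulation}, so no additional construction is required and the corollary follows directly.
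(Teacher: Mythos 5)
Your derivation follows the paper's route: the corollary is read off from Theorem~\ref{thnm:correct}, with the only additional work being to connect per-step fidelity to halting and output. Your treatment of the halting direction is actually more careful than the paper's one-line version --- you correctly trace that after the TM enters $Q_n$ at step $n$, one more Q-ball and one more $0$/$1$-ball are released, and the latter dies in the interceptor below $Q_n$ before it can reach the router or a TQ trigger, and you note that these last two balls do not disturb any $S$ bit. That is a legitimate tightening of the argument.

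However, your case analysis has a gap that the paper's proof explicitly closes: the dichotomy ``TM halts in $Q_n$'' versus ``TM runs forever'' is not exhaustive under the paper's TM model, because every non-halting transition forces a left or right move and the TM may therefore crash by attempting to move left from the first cell. Your converse claim --- that as long as $Q_n$ is not entered, every $0$/$1$-ball pulls a TQ trigger and so the TT runs forever exactly when the TM does --- breaks here: on an L$0'$/L$1'$ move in the first 3-cell the ball writes $1$ to a $p$ bit connected to nothing, still pulls TQ, and the resulting Q-ball finds no $h=1$ anywhere and falls down the Bus forever without releasing anything further. Under your own definition of TT halting (``no further ball is ever released''), this configuration would count as a halt with a well-defined $S$-string as output, even though the TM's computation is undefined, so claim (i) as stated is false in this case. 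The paper handles this by classifying both the non-halting and the off-the-left-edge behaviors as ``undefined'' on both sides of the simulation (the TT either never reaches $Q_n$ or enters a state with no $h$ set, after which a ball travels forever). You need either to add this third case to your analysis, or to restrict attention to TMs that never move off the left end of the tape; as written the corollary does not follow for inputs on which the TM crashes.
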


\begin{proof}
If on input $w$, the TM runs for $h$ steps, and halts, with
output $w'$ on its tape, then by the Theorem, after $2h$ balls, the TT's STM
records the state is $Q_n$, and the 3-cells represent the string $w'$ via
their $S$ bits..   If the TM when run on $w$ is undefined, because it
either doesn't halt, or it moves off the left edge of the tape, then the
TT simulation either doesn't halt (never reaches state $Q_n$), or else it
sets the $p$ bit in the first 3-cell, which isn't connected to any $h$ bit,
and then the simulation reaches a point at which there is no $h$-bit set
to 1, so the next ball will just keep traveling down forever. 
(Alternatively, we could put an interceptor if an L0 or L1 move were
detected from the first 3-cell, indicating a move-off-left-edge-of-tape
condition, and then the simulation would halt, but we could determine
that the computation were invalid because there would be a ball in that
interceptor.) In any case, the simulation of the TM on this input $w$ is
also undefined.
\end{proof}

\begin{corollary}
\label{cor:complete}
Turing Tumble is Turing-complete.
\end{corollary}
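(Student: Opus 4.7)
The plan is to observe that Corollary~\ref{cor:simulates} already does essentially all of the work: it says that for the TM fixed in Section~\ref{sec:simulation}, the TT construction computes the same partial function on inputs as the TM does. To conclude Turing-completeness from this, I need to make explicit that the construction of Section~\ref{sec:simulation} is uniform in the TM being simulated, and then invoke the Church-Turing thesis.

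First I would unwind the definition of Turing-completeness in the form used here, namely: for every Turing machine $M$ (satisfying the normal-form assumptions stated at the start of Section~\ref{sec:simulation}, which are without loss of generality), there is a Turing Tumble layout $L_M$ whose behavior on the encoded input is the same partial function as $M$. Second, I would point out that the construction in Section~\ref{sec:simulation} is a recipe that takes an arbitrary TM description $(Q_1,\ldots,Q_n,d)$ and produces a specific finite STM, Router, and infinite sequence of 3-cells — nothing in that recipe depends on features particular to the running example. Third, given any such $M$ and input $w$, I instantiate the recipe to get $L_M$, initialize the first $|w|$ 3-cells with the symbols of $w$ and set $h=1$ in the first 3-cell as prescribed in the ``Initial Set-up'' subsection, and release the initial Q-ball. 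By Corollary~\ref{cor:simulates}, $L_M$ on input $w$ halts exactly when $M$ halts on $w$ and produces the same output tape.

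Finally, I would close by appealing to the Church-Turing thesis: since every effectively computable function is computed by some Turing machine, and every Turing machine can be simulated by a Turing Tumble layout via the above construction, every effectively computable function is computed by Turing Tumble (in the naturally extended model with infinite board and unlimited pieces), which is exactly the statement that TT is Turing-complete.

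I do not anticipate a real obstacle here; the corollary is a packaging step rather than a new argument. The only thing to be slightly careful about is ensuring that the reader sees the construction as a \emph{uniform} map from TM descriptions to TT layouts, rather than as a one-off construction for the 3-state running example — I would devote one sentence to making that explicit before citing Corollary~\ref{cor:simulates}.
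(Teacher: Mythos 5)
Your proposal is correct and follows the same route as the paper, which simply states that the corollary follows immediately from Corollary~\ref{cor:simulates}. The extra care you take in spelling out that the construction is uniform in the TM description (not tied to the running example) is a reasonable elaboration of the same one-line argument rather than a different approach.
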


\begin{proof}
Follows immediately from Corollary~\ref{cor:simulates}.
\qed
\end{proof}

\section{Improvements: Using Only One Trigger}
\label{sec:improvements}

In the Introduction, we discussed what a ``reasonable'' extension of
the finite toy would be in the context of allowing unbounded
computation. An infinite number of parts such as bits, for example,
seems reasonable, because otherwise only a finite amount of information
can be stored, and we have nothing more than a finite state machine. 
Triggers somehow seem different. Because they each need to be
physically connected to a ball hopper, placing them arbitrarily far down
seems problematic, as does placing an infinite number of them, as the
connecting pieces behind the board, which we are not modeling, would
interfere with each other. It seems that we need to assume some type of
magic connection, able to function across arbitrary distances, between a
freely-placed trigger and a ball hopper in the case of infinitely many
triggers.

This motivates the question of whether a more reasonable simulation is
possible - can we use only finitely many triggers?   An
 {\href{https://docs.google.com/document/d/1HmGJwkIMCrnXdrB20eZuiuXYMDz4DddypYoSYGbWLiM}
{early version of this paper} 
used finitely many triggers, and
could be improved to using only one trigger, but seemed to lack
reasonability in that the entire finite control was encoded in each
cell.  In this section, we show how the simulation of Section~\ref{sec:simulation} can
be modified to use only a single trigger, provided that infinitely long
gear chains are used.  We then sketch how these infinite chains can be
removed at the expense of computation time (more balls needed to pass
through the structure), by using a ``query ladder'' via which signals
from the tape cells may travel upward one cell at a time towards the
finite control at the top, as each ball passes through.

We replace the triggers T0 and T1, which are essentially being used to
communicate whether the symbol being scanned by the head is 0 or 1, with
a new gear bit ``$c$'' holding the contents of the scanned cell. The bit
$c$ is in every cell, and these are all connected together and join an
infinite chain that runs along the left side of the structure, reaching
the top where it can be queried just before the STM, so that a computing
ball may be routed into the correct half of the STM.

In the previous construction, we alternated computing balls with query
balls, so that the query ball can determine the value of the scanned
symbol (now held in bit $c$) in the new cell being scanned after each
computing ball executed a transition. We still need to do this, but
since we no longer have triggers to initiate the query ball or computing
ball sequence, we use a new mechanism employing a ``polling ball'', with
a new bit ``$r$'' to indicate when the machine is ready for either a new
computing ball or a new query ball.  Like $c$, the bit $r$ reaches into
every cell (which are now called ``5-cells''), and connects to an
infinite chain. The $r$-chain runs along the right side of the structure,
and reaches the top.

Polling balls are balls that are released periodically every 100 steps
or so, from a hopper at the top.  When the polling ball reaches $r$, it
branches. If $r = 0$, the machine is not ready to make progress as a
computation is still in progress. So, a new polling ball is released,
and the current ball is intercepted. If $r= 1$, then the ball in the
structure below has completed its mission, and, after setting $r$ to 0, the polling ball becomes
either a computing ball, or a query ball, depending on the state of a
toggle bit that alternates between the two events.  Refer to Figure~\ref{fig:polling}.

\begin{figure}[htp]
\begin{center}
\includegraphics[scale=0.35]{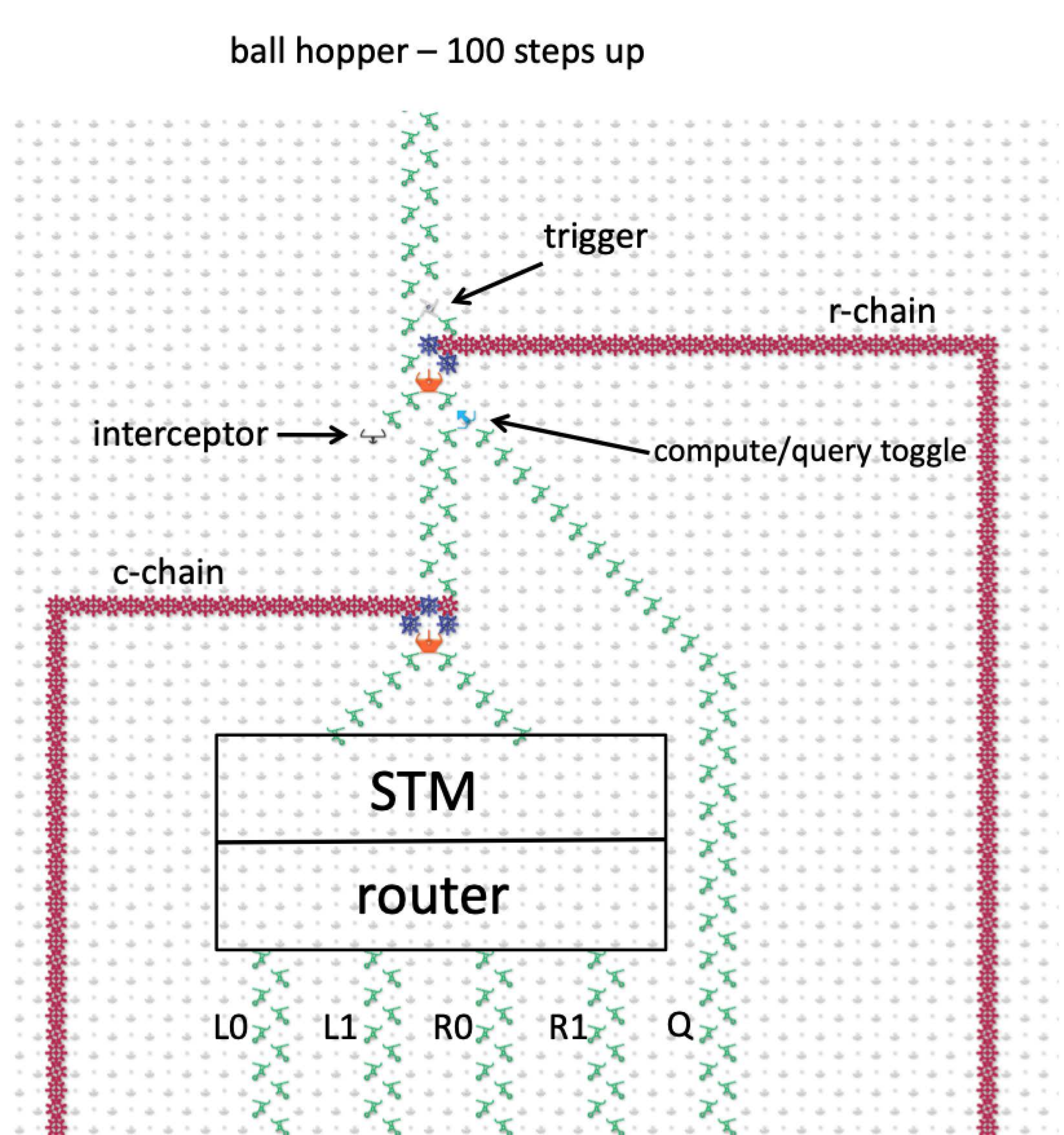}
\end{center}
\caption{Top of structure; polling balls may become computing balls
or query balls.}
 \label{fig:polling}
\end{figure}

If the toggle bit is in the 0 state, the ball becomes a Q-ball and is
routed down the Q line, and proceeds as in Section~\ref{sec:simulation}, but instead of
triggering T0 or T1 (which have been eliminated) after reading the $S$ bit
in the currently scanned cell, it sets $c$ to the value of $S$, sets $r$ to 1,
and is then intercepted.  The next polling ball will see $r=1$ again, and
become a computing ball.

If the compute/query toggle bit is in the 1 state, then the ball
becomes a computing ball and is routed to cross the $c$-chain where the
current symbol $c$ is tested and then the ball is routed into the correct
half of the STM.  From there, the behavior is identical to that
described in the construction in Section~\ref{sec:simulation}- in particular - the ball
travels down the Bus and when encountering $h=1$, sets it to 0, and
branches right into a 5-cell, and changes $S$ according to the
instruction. It passes $r$, setting it to 1, and then just as in the
construction of Section~\ref{sec:simulation}, if needed for a Left move, $p$ is set to 1 so
that the previous 5-cell's copy of $h$ is set to 1.  For a Right move, $p$
remains 0, and the ball goes down to set the next $h$ to 1. Thus in
either case, $r$ is set to 1 just before the ball is intercepted. Note
that at this point, $h$ has been changed to reflect that the TM is
scanning either the cell above or below, and the ``$c$'' bit incorrectly
holds the value of the cell that the head was just at.  The next
polling ball will become a Q-ball used to update the value~of~$c$. 

As mentioned, the infinite $c$-chains and $r$-chains reach into each
5-cell, occurring in the order $c$, then $r$, just below the $S$ bit and
before the $p$ bit. These will be some squiggly chains to be sure to
avoid intersecting with other gear chains, but notice in Figure~\ref{fig:architecture} that
the $S$ chains are short and do not connect to anything, so they can be
navigated around. And, there are many openings to the right (between $p$
in even numbered cells and $h$ in the next odd numbered cell) and to the
left (between odd $p$ and the next even $h$) for the $c$-chain and $r$-chain to
squeeze through.  Let's take a closer look.

\subsection*{Topology} 
A picture (Figure~\ref{fig:rctopology}) is worth the following words: In all odd 5-cells, the 
$r$-chain goes down around
the left side of $p$, then down and under $p$, heading out to the right
edge. The $c$-chain goes up around the right sides of $S$ then $h$, then
swings left, and out to the left edge. In even 5-cells, the $r$-chain
goes left, up around the left sides of $c, S,$ and $h$, then right and out
to the right edge.  The $c$-chain goes down, around the right edges of
 $r$ and $p$, then goes left and out to the left edge.  Thus, from each
5-cell, the $r$ and $c$ variables can reach the right and left edges,
respectively, and join an infinitely long chain leading up to the top.
~Once at the top, these long chains turn inward above the finite
control, with $r$ above the $c$. 

\begin{figure}[htp]
\begin{center}
\includegraphics[scale=0.35]{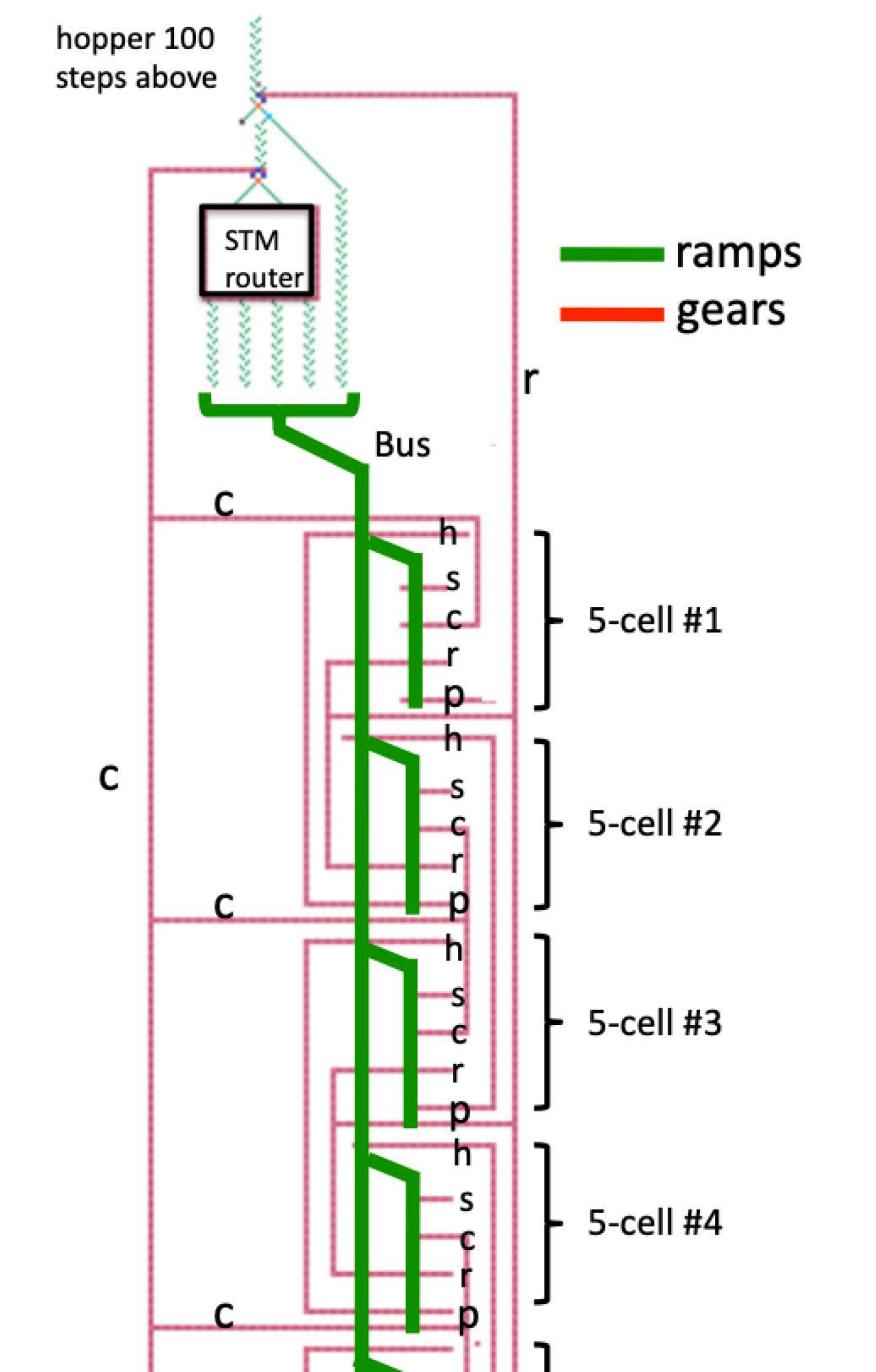}
\end{center}
\caption{Topology of $r$- and $c$- gear chains.}
 \label{fig:rctopology}
\end{figure}

Note that there are never two computing balls, or two query balls, or
one of each, active at the same time. At most, there are two balls
active: (1) a polling ball, and (2) either a computing or query ball
below.

A possible {\em race condition} occurs if at some point, a polling
ball near the top tries to cross the $r$-chain or $c$-chain at the exact
moment that a computing ball or query ball within a 5-cell below is
setting the $r$-chain or $c$-chain, respectively  The machine may jam. 
Even if the two balls are both executing an ``Ignore'' instead of a
write-0, it could still jam.  This seems like a fundamental problem
with more than one active ball if they both can access the same variable
at the same time. However, assuming balls fall at a constant rate, or at
least an ultimately periodic one, we can control the relative times that
balls are reading and writing the same variables by adjusting the
distances in various parts of the construction, thereby avoiding race
conditions.

For example, suppose that a 5-cell can be implemented within 100 vertical positions (``steps'')
on the board. By padding with extra space, we enforce a periodicity of
exactly 100 steps, and ensure that the polling ball reads and writes $r$
and $c$ near the beginning of the period, and the computing balls and
query balls read and write $r$ and $c$ near the middle of the period. 
~Specifically, from the ball hopper, we place a ramp of 100 steps to
reach the trigger, thus balls are released every 100 time steps.  Just
below the trigger, $r$ is read and branched on, and $c$ is read (if $r = 1$).
~ These can be made to occur at times 10 and 20 mod 100 (or earlier),
by positioning them at 110 and 120 steps below the ball hopper, (i.e.,
10 and 20 steps below the trigger, respectively.)  We position the
first 5-cell so that its ``$c$'' variable is at a position congruent to
50 mod 100.  We place the $r$ bit at a position 10 lower, hence at a
position congruent to 60 mod 100.  This leaves more than ample room
between the bits for branching.  By this construction, all polling
balls read $r$ and $c$ at times 110 and 120 mod 100, respectively, and all
computing balls or query balls (of which there is only one at any time)
within a 5-cell read and write $c$ and $r$ at times congruent to 50 and 60
mod 100.  Thus, no race condition can occur between a polling ball and
a computing or query ball.  If the implementation of a 5-cell needs
more than 100 steps, we just change the number 100 to a suitable larger
number.

The simulation begins with $Q_1 = r = 1, c =$ the symbol in the first
5-cell, the first $|w|$ 5-cells holding the symbols
comprising the input string $w$, and $h = 1$ in the first 5-cell. The
compute/query toggle bit is set to 1, indicating that the machine is
ready to compute.  All other gear bits are set to 0 (except for $p$ in
the second 5-cell, which is geared to $h$ in the first, hence set to 1). 
The trigger is pushed, and the computation begins with the first ball
ultimately being routed into the STM.\\

This completes the description of the modified machine.
The proof that the simulation is correct relies on a number of lemmas,
closely related to those previously shown. 

Lemma~\ref{lem:oneQ} (stating that at any time exactly one state variable $Q_i$ is set to 1) is used.   We need the following slight modification of Lemma~\ref{lem:oneh}: 

\begin{lemma}
\label{lem:still-oneh}
After each computing ball is intercepted, there is
exactly one $h$ set to 1.
\end{lemma}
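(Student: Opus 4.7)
The plan is to mirror the proof of Lemma~\ref{lem:oneh} almost step for step, with an induction on the number of computing balls that have been intercepted. The base case comes directly from the Initial Set-up paragraph of Section~\ref{sec:improvements}: $h=1$ in the first 5-cell only, and all other $h$-bits are 0. Given this, the inductive step will amount to checking that (i) between two successive computing-ball interceptions, the $h$-bits are only modified by the computing ball itself, and (ii) the action of that computing ball moves the unique $h=1$ to exactly one adjacent cell.

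For (i), I would argue case by case on what other ball types can be in motion. A polling ball never enters a 5-cell: it either branches back to the ball hopper when $r=0$, or (after setting $r=0$) is routed into the Q-line or into the STM at the top of the board, so it touches only $r$, $c$, and the compute/query toggle bit. A query ball travels down the Bus but, by construction, performs only a \emph{non-destructive} read of $h$ in each 5-cell and, when it finally branches into the unique cell with $h=1$, it only writes to $c$ and to $r$ before being intercepted. Hence no $h$-bit is ever altered except by a computing ball.

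For (ii), I would reproduce the case analysis from Lemma~\ref{lem:oneh}. By the inductive hypothesis there is a unique 5-cell $x$ with $h=1$; by Lemma~\ref{lem:oneQ} there is a unique $Q_i=1$; and since the most recent query ball set $c$ to the value of $S$ in cell $x$, the current computing ball enters the correct half of the STM, the Router sends it down one of L0, L1, R0, R1, and when it meets the $h$-row of cell $x$ the Write-0 construct flips $h$ to 0. On L-moves the ball then writes $p=1$ in cell $x$, which via the $p$-$h$ gear chain sets $h=1$ in cell $x-1$ before interception; on R-moves the ball continues past cell $x$ and writes $h=1$ in cell $x+1$ before interception. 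In either case, exactly one $h$-bit is set to 1 when the ball is intercepted.

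The main obstacle is really step (i): unlike the original construction, in Section~\ref{sec:improvements} we may have two balls in flight simultaneously (a polling ball near the top while a query or computing ball is still below). I expect to handle this by invoking the race-condition analysis already given in the Topology paragraph: the polling ball only interacts with $r$ and $c$ near the top of the board, at vertical positions chosen so that its accesses happen at times congruent to 10 and 20 mod 100, disjoint from the times (50, 60 mod 100) at which any ball inside a 5-cell accesses $r$ and $c$. Since the polling ball never touches an $h$-bit at all, and the query ball only reads $h$ non-destructively, this timing argument suffices to isolate the computing ball as the sole agent that can mutate $h$. The one true edge case -- an L-move attempted from the first 5-cell, whose $p$-bit is ungeared -- corresponds exactly to a TM crashing off the left end of the tape; I would handle it separately, in the same way as in the proof of Lemma~\ref{lem:oneh}, noting that the simulation is then considered undefined so the invariant need only be maintained along legal computations.
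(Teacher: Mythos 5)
Your proposal is correct and matches the paper's intent exactly: the paper only remarks that the proof of this lemma is ``similar to the proof of Lemma~\ref{lem:oneh},'' and your induction (base case from the initial set-up, the L/R case analysis via the $p$-$h$ gear chains, and the left-edge caveat) is precisely that argument carried over. Your extra care in step (i) --- checking that polling balls never touch $h$ and that query balls only read it non-destructively, with the mod-100 timing argument ruling out interference --- is a useful elaboration the paper leaves implicit, not a departure from its approach.
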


\begin{lemma}
\label{lem:still-alternate}
Query balls and computing balls alternate, interspersed
with any number of polling balls.
\end{lemma}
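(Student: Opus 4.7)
My plan is to proceed by induction on the number of query balls and computing balls (collectively, ``mission balls'') that have been released, using three invariants that hold in the quiet periods between mission balls: (i) at most one mission ball is active in the structure at any time; (ii) the bit $r$ equals $1$ exactly when no mission ball is active; and (iii) the compute/query toggle bit records the type to which the next converted polling ball will change, and flips at each conversion. The base case is the initial configuration, where no mission ball has been released, $r = 1$, and the toggle is set to $1$, so all three invariants hold before the first polling ball enters the structure.

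For the inductive step, I would analyze each ball type in turn. A polling ball arriving when $r = 0$ branches to an interceptor without altering any other bit, which both preserves all three invariants and explains why arbitrarily many polling balls may appear interspersed between mission balls. A polling ball arriving when $r = 1$ sets $r$ to $0$, flips the toggle, and is routed into either the query path or the computing path according to the toggle's prior value, producing exactly one new mission ball of the type dictated by that prior value. Finally, a mission ball -- a query ball that sets $c$ to the read symbol and then sets $r=1$ just before interception, or a computing ball that executes the state update and head motion per Section~\ref{sec:simulation} and likewise sets $r=1$ just before interception -- terminates with $r = 1$ and no active mission ball remaining, re-establishing invariants (i) and (ii) without disturbing the toggle.

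Combining the three cases, between any two consecutive mission balls only intercepted polling balls can appear (because while a mission ball is active, $r = 0$ and every polling ball is intercepted), and those two consecutive mission balls must be of opposite types because the toggle was flipped exactly once in between. This is precisely the alternation claimed by the lemma.

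The main obstacle I anticipate is ruling out race conditions: the entire invariant argument tacitly assumes that no polling ball reads $r$ or $c$ at the exact instant a mission ball writes one of them, since simultaneous access to a single gear chain could jam the mechanism or leave the bit in an ill-defined state. This is handled by the explicit periodic schedule in the construction -- balls released every $100$ time units, with polling-ball accesses to $r$ and $c$ occurring at residues roughly $10$ and $20$ modulo $100$, and mission-ball accesses at residues roughly $60$ and $50$ -- and I would appeal to that construction to justify treating each read and each write as atomic and well-separated in time, rather than re-derive the timing analysis.
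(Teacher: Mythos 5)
Your proof is correct and follows essentially the approach the paper intends: the paper does not write this proof out, stating only that it is analogous to the induction for Lemma~\ref{lem:alternating}, and your induction on mission balls with the $r$-bit and toggle-bit invariants is exactly that analogue made explicit (modulo the minor caveat that a halting computing ball is intercepted in the STM without resetting $r$, which only truncates the alternating sequence). Your appeal to the mod-100 timing schedule to rule out races between polling and mission balls is the right way to discharge the one genuinely new issue in this setting.
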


\begin{lemma}
\label{lem:computing-continues}
Every computing ball either sets the state to the halt state
$Q_n$, or else is ultimately intercepted and eventually followed by another
computing ball unless the TM has moved off of the left edge of the
tape.
\end{lemma}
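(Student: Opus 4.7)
The plan is to trace a single computing ball through the modified machine and then invoke the polling loop to produce its successor. First I apply Lemma~\ref{lem:oneQ}: exactly one $Q_i$ is set to $1$ when the ball enters the STM. If $i = n$, the interceptor immediately below $Q_n$ catches the ball and the first clause of the conclusion holds. Otherwise, the STM/router construction from Section~\ref{sec:simulation} --- reused unchanged in this modified architecture --- routes the ball onto exactly one of the lines L0, L1, R0, R1 while updating the state variables.

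Next I would use the invariant underlying Lemma~\ref{lem:still-oneh} --- that at the instant this ball starts down the Bus there is a unique 5-cell with $h = 1$ --- to conclude that the ball branches out of the Bus at precisely that 5-cell. Inside that 5-cell it writes the new symbol at $S$, writes $r := 1$, and either writes $p := 1$ and is intercepted (L-case), or writes $p := 0$, descends one 5-cell further, writes $h := 1$, and is intercepted (R-case). In every sub-case the ball is intercepted with $r$ already set to $1$. The one degenerate sub-case is an L-move out of the \emph{first} 5-cell, whose $p$-gear is unattached to any $h$: the ball is still intercepted, but no 5-cell now carries $h = 1$. This is precisely the off-left-edge condition, and it must be verified that it is the \emph{only} way the invariant ``some $h$ equals $1$'' can be destroyed.

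To get ``eventually followed by another computing ball'' (absent the exception) I appeal to the polling mechanism of Figure~\ref{fig:polling}. Polling balls are released at a fixed period, so the first polling ball released after $r$ becomes $1$ sees $r = 1$ near the top, clears $r$ back to $0$, consults the compute/query toggle (now in the query position), and descends as a query ball. Because some $h = 1$ still exists, that query ball finds it, copies $S$ into $c$, sets $r := 1$, and is intercepted; the next polling ball then finds $r = 1$ with the toggle back in the compute position and itself becomes the next computing ball. Here I would cite the $100$-step race-condition analysis from the Topology subsection to justify that the polling ball's accesses to $r$ and $c$ never collide with those of the intermediate ball.

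The main obstacle is not any individual transition step but the careful bookkeeping of which invariants hold at which instants: the lemma is really a statement about the composition of a compute ball with a polling ball with a query ball with another polling ball, and one must check that the polling-toggle sequence reliably produces the next compute ball while also pinning down off-left-edge as the unique failure mode for the $h$-invariant. Once this bookkeeping is in place, the lemma reduces to the case analysis on (L versus R) times (first 5-cell versus not first 5-cell) already sketched above.
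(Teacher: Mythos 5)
Your proposal is correct, and it actually supplies more than the paper does: the paper dismisses this lemma with the single remark that it ``is straightforward,'' so your trace through the STM, the Bus, the active 5-cell, the setting of $r$, and then the polling-ball $\to$ query-ball $\to$ polling-ball $\to$ computing-ball cycle is exactly the argument the author is implicitly relying on, including the correct identification of the L-move out of the first 5-cell as the unique way the ``some $h=1$'' invariant dies. The only wrinkle worth noting is in your handling of the halt case: in the paper's architecture the interceptor sits below the \emph{read} branch for $Q_n$ in the STM, so the ball that is caught there is the one that \emph{finds} $Q_n=1$ (having been \emph{set} by the previous computing ball, which still completes its move and is followed by one more computing ball); the lemma's phrase ``sets the state to $Q_n$'' is loose on this point, and a fully careful write-up should say explicitly that the ball which writes $Q_n'$ satisfies the second disjunct while its successor is the one intercepted in the STM. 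This is a defect of the lemma's wording rather than of your proof.
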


The proofs of {Lemmas~\ref{lem:still-oneh} and~\ref{lem:still-alternate}
are similar to the proofs of Lemmas~\ref{lem:oneh} and~\ref{lem:alternating}, respectively.  Lemma~\ref{lem:computing-continues} is straightforward.
The following additional lemmas are proved via induction, and
are not difficult. Most }{follow}{~by construction.

\begin{lemma}
\label{lem:correct-Q}
Just after the nth computing ball is intercepted, the unique
$Q_i$ set to 1 is the state that the TM would be in after n steps of
computation.
\end{lemma}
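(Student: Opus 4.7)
The plan is to mirror the inductive argument of Theorem~\ref{thnm:correct}, adapted to the polling/compute/query ball regime. I would proceed by induction on $n$. The base case $n = 0$ is immediate from the initial set-up: $Q_1$ is set to $1$ and every other $Q_i$ is $0$, matching the TM's initial configuration. For the inductive step, assume the claim holds for all values up through $n-1$ and consider the $n$th computing ball.

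By Lemma~\ref{lem:still-alternate}, between the $(n-1)$st and $n$th computing balls exactly one query ball falls (separated from each of them by any number of polling balls, none of which modify $Q_i$, $h$, $S$, or $c$). I would argue that this query ball finds the unique $h=1$ 5-cell (using Lemma~\ref{lem:still-oneh}) and writes the value of $S$ there into the bit $c$, so that when the $n$th computing ball is released from the top it sees $c$ equal to the symbol scanned by the TM after $n-1$ simulated steps. Strictly, this uses companion inductive assertions about $h$ and $S$ being correctly positioned and valued — those will be proved by parallel induction in the lemmas that follow, and for the present proof I would simply quote them as parts (b) and (c) of a joint inductive hypothesis.

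Now the computing ball crosses the $c$-chain and is routed into the correct half of the STM: the left half if $c = 0$ and the right half if $c = 1$. By Lemma~\ref{lem:oneQ}, exactly one $Q_i$ is set to $1$ when the ball enters, and by the inductive hypothesis this $Q_i$ is the state of the TM after $n-1$ steps. The ball then traverses the read-branch-write tree and emerges in the column corresponding to the pair $(Q_i, c)$. By the construction of the STM (as in Section~\ref{sec:simulation}), the write gear-chains along this column set $Q_j'$ to $1$ and every other $Q_k'$ to $0$, where $d(Q_i, c) = (Q_j, b', D)$; through the gear couplings this forces $Q_j = 1$ and all other $Q_k = 0$. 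By Lemma~\ref{lem:computing-continues}, the ball is then intercepted (either at the halt state $Q_n$ or after completing the Bus/5-cell update), so once it is intercepted, $Q_j$ — the correct $n$th-step state of the TM — is the unique state variable set to $1$.

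The main obstacle is not any single step of this argument but the coupling between the state, head position, tape contents, and the scanned-symbol bit $c$: to know $c$ is correct when the computing ball enters the STM, I need to know simultaneously that $h$ is in the right 5-cell, that $S$ at that cell matches the TM's tape, and that the intervening query ball actually completed its update of $c$ before the next polling ball converted itself into a computing ball. The first two are handled by stating Lemma~\ref{lem:correct-Q} as one component of a joint induction on the four correctness properties; the last is handled by the timing/race-condition analysis already given for the modified construction, which guarantees that the query ball writes $c$ before any subsequent polling ball reads $c$.
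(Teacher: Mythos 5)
Your proof is correct and matches the paper's intent: the paper states only that Lemmas~\ref{lem:correct-Q}--\ref{lem:correct-r} ``are proved via induction\ldots{} most follow by construction,'' and your argument is exactly the expected adaptation of the inductive proof of Theorem~\ref{thnm:correct} to the polling/query/computing-ball regime, including the necessary joint induction with the head-position, tape-content, and $c$-bit lemmas. No gaps beyond those the paper itself tolerates.
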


\begin{lemma}
\label{lem:correct-head}
Just after the nth computing ball is intercepted, the 5-cell
that contains the unique $h$ set to 1 is the one that corresponds to the
cell being scanned by the TM after n steps of its computation.
\end{lemma}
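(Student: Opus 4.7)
The plan is to prove Lemma~\ref{lem:correct-head} by induction on $n$, leveraging the preceding lemmas (especially Lemmas~\ref{lem:oneQ}, \ref{lem:still-oneh}, \ref{lem:still-alternate}, and \ref{lem:correct-Q}) and closely mirroring the proof of part~(b) of Theorem~\ref{thnm:correct}, while being careful about the new $c$/$r$ polling mechanism.

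For the base case $n=0$, by the initial set-up described just above the lemma, $h=1$ in the first 5-cell and nowhere else, which matches the TM head's initial position on the first tape cell. For the inductive step, assume the statement holds after the $(n{-}1)$st computing ball has been intercepted; combined with Lemma~\ref{lem:correct-Q}, this means the structure correctly encodes the TM configuration after $n{-}1$ steps, with the unique $h=1$ sitting in the 5-cell representing the scanned cell, say the $x$th. Let the TM be in state $Q_i$ reading symbol $b$ in that cell, with $d(Q_i,b)=(Q_j,b',D)$. By Lemma~\ref{lem:still-alternate}, between the $(n{-}1)$st and $n$th computing balls there is exactly one query ball; I would first argue that this query ball traverses the Bus until it hits the unique $h=1$ in the $x$th 5-cell (Lemma~\ref{lem:still-oneh}), reads $S=b$ there, writes $c:=b$ via the $c$-chain, sets $r:=1$, and is then intercepted, leaving $h$ unchanged.

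Next I would trace the $n$th computing ball. The polling ball that becomes it reads the updated $c=b$ at the top, enters the correct half of the STM, updates the state to $Q_j$ (establishing Lemma~\ref{lem:correct-Q} for step $n$, which I use as a companion), and is routed by the Router onto the line $Db'$. Because no other ball has moved between the query ball's action and the computing ball's descent down the Bus, the unique $h=1$ is still in the $x$th 5-cell. As the computing ball crosses that row, the Write-0 on $h$ sets it to 0, and the ball branches right onto one of L0$'$, L1$'$, R0$'$, R1$'$. If $D=L$, the ball writes $p:=1$ in the $x$th 5-cell; since $p$ in the $x$th 5-cell is geared to $h$ in the $(x{-}1)$st 5-cell, that $h$ flips to 1, so the unique $h=1$ is now in the $(x{-}1)$st 5-cell, matching the TM's move left. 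If $D=R$, the ball writes $p:=0$, descends to the $(x{+}1)$st 5-cell, writes $h:=1$ there, and is intercepted after raising $r$; so the unique $h=1$ is now in the $(x{+}1)$st 5-cell, matching the TM's move right. Either way, uniqueness of $h=1$ is preserved by Lemma~\ref{lem:still-oneh}, and the 5-cell containing it is precisely the one scanned after $n$ steps.

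The main obstacle I anticipate is not the combinatorial accounting, which is parallel to the 3-cell proof, but justifying that the race-condition analysis in the preceding ``Topology'' discussion really does license the sequential reasoning above: specifically, that the polling ball cannot be reading $r$ or $c$ at the same moment the query/computing ball below is writing them, so that the values seen by each ball are exactly those set by the most recent prior ball. I would handle this by citing the $100$-step periodicity argument already given (polling balls read $r,c$ at times $\equiv 10,20\pmod{100}$; balls inside a 5-cell write them at times $\equiv 50,60\pmod{100}$), which guarantees that the query ball completes its update of $c$ before the next polling ball, promoted to a computing ball, reaches the top of the $c$-chain. The boundary case where $D=L$ from the first 5-cell is handled as in Lemma~\ref{lem:oneh}: the first cell's $p$ is disconnected, no $h$ is set to 1, and the simulation correctly diverges, matching the TM's off-the-left-edge crash, so the lemma's hypothesis (that an $n$th computing ball was intercepted with the TM having taken $n$ valid steps) is not violated.
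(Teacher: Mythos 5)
The paper never actually writes out a proof of this lemma---it only remarks that these lemmas ``are proved via induction, and are not difficult. Most follow by construction''---and your argument, which transplants part~(b) of the proof of Theorem~\ref{thnm:correct} to the 5-cell/polling-ball setting, is exactly the intended induction and is correct. The details you add that the paper leaves implicit (the simultaneous induction with Lemma~\ref{lem:correct-Q}, the appeal to the 100-step periodicity to rule out the race condition, and the off-the-left-edge boundary case) are all appropriate and consistent with the construction.
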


\begin{lemma}
\label{lem:correct-tape}
Just after the nth computing ball is intercepted, the
settings of the $S$ variables in the 5-cells accurately reflect the
contents of the TM tape after n steps of computation.
\end{lemma}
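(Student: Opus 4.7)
The plan is to prove Lemma~\ref{lem:correct-tape} by induction on $n$, closely paralleling the proof of part (c) of Theorem~\ref{thnm:correct}, but with care about the new ingredients (polling balls, query balls, the $c$- and $r$-chains) introduced in Section~\ref{sec:improvements}. The base case $n=0$ is immediate from the Initial Set-up description: before any computing ball has been intercepted, the first $|w|$ 5-cells have their $S$ bits set to the symbols of $w$ and all remaining $S$ bits are $0$, which is exactly the initial TM tape.

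For the inductive step, assume the lemma holds for $n-1$, and consider what happens between the interception of the $(n{-}1)$st and $n$th computing balls. First I would argue that in this interval, only a computing ball can possibly alter any $S$-bit. Polling balls never descend past the top: they only read $r$ (and on the $r{=}1$ branch also read $c$), decide whether to become a query or computing ball via the compute/query toggle, and never touch any 5-cell's $S$. Query balls travel down the Bus on line $Q$, branch at the unique $h=1$ onto $Q'$, and at $S$ perform only a non-destructive read used to set $c$; by construction of the ``read'' component of Section~\ref{sec:parts}, $S$ is left unchanged. Hence the only write to any $S$ between the two relevant computing-ball interceptions is performed by the $n$th computing ball itself.

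Next I would pin down which $S$ that computing ball writes, and to what value. By Lemma~\ref{lem:correct-head} applied to the inductive hypothesis, the unique $h=1$ at the moment the $n$th computing ball starts down the Bus is in the 5-cell corresponding to the tape cell scanned by the TM after $n-1$ steps, call it cell $x$ with current symbol $b$. By Lemma~\ref{lem:correct-Q} the unique $Q_i=1$ corresponds to the TM's state after $n-1$ steps. The $c$-bit, having been refreshed by the immediately preceding query ball (Lemma~\ref{lem:still-alternate}), equals $b$, so the computing ball is routed into the correct half of the STM and then down the correct line $Db'$ of the Bus, where $d(Q_i,b)=(Q_j,b',D)$. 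At the intersection with $h$ in cell $x$, the ball branches onto the primed line and, by the construction of the 5-cell (the L0$'$/R0$'$ lines cross $S$ with a write-$0$ and L1$'$/R1$'$ with a write-$1$), writes exactly $b'$ into the $S$-bit of cell $x$. All other $S$-bits are untouched.

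Combining: just after the $n$th computing ball is intercepted, every $S$-bit other than that of cell $x$ agrees with its previous value (and hence, by the inductive hypothesis, with the TM's tape after $n-1$ steps, which at positions $\neq x$ equals the TM's tape after $n$ steps), while the $S$-bit of cell $x$ has been set to $b'$, which is precisely the new symbol written by the TM on its $n$th step. This establishes (c) for $n$ and completes the induction. The only delicate point, and therefore the main obstacle to get right, is to rule out interference from polling and query balls on $S$; this is handled by the periodicity/race-condition argument of Section~\ref{sec:improvements} (which ensures that at most one non-polling ball is active below the top), combined with the observation that the query ball's interaction with $S$ is a non-destructive read.
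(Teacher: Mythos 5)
Your argument is correct and matches what the paper intends: the paper itself omits the proof of this lemma, stating only that it is ``proved via induction'' and ``follows by construction,'' and your induction fills in exactly the details implicit in the proof of Theorem~\ref{thnm:correct}(c), with the right attention to the new ingredients (polling balls are confined to the top, query balls only non-destructively read $S$). The one point worth making explicit is that your appeals to Lemmas~\ref{lem:correct-head}, \ref{lem:correct-Q}, and \ref{lem:correct-c} at step $n-1$ mean the induction must be run simultaneously over all of these lemmas, just as Theorem~\ref{thnm:correct} proves (a), (b), and (c) together.
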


\begin{lemma}
\label{lem:correct-c}
Unless $Q_i$ = $Q_n$, after the nth query ball is intercepted, the
setting of the variable $c$ accurately reflects the symbol in the cell
being scanned after the nth step of computation of the TM, and this
remains true while the next computing ball travels through the finite
control.
\end{lemma}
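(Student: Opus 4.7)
The plan is to proceed by induction on $n$, following the same pattern as the proof of Theorem~\ref{thnm:correct}. The crucial observation is that $c$ is written only by a query ball (which copies $S$ from the currently scanned 5-cell up the $c$-chain) and is read only by a computing ball (as it crosses the $c$-chain just before entering the STM). So two things must be established: that the $n$th query ball leaves $c$ holding the correct symbol, and that nothing disturbs $c$ between the intercept of that query ball and the read by the next computing ball.

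For the base case ($n=1$), just after the first computing ball is intercepted, Lemma~\ref{lem:correct-head} tells us the unique 5-cell with $h=1$ is the one scanned by the TM after step~1, and Lemma~\ref{lem:correct-tape} tells us the $S$-values agree with the TM tape. By Lemma~\ref{lem:still-alternate} the next non-polling ball is the first query ball; tracing its path, it travels down the Q-line, branches at the unique $h=1$ (Lemma~\ref{lem:still-oneh}), reads $S$ at that 5-cell, writes the value to $c$ through the infinite $c$-chain, sets $r=1$, and is intercepted. Hence $c$ is correct. The inductive step is structurally identical, with the $n$th computing ball and $n$th query ball in place of the first, the inductive hypothesis being invoked only to confirm that the computation has been tracking the TM correctly up to that point.

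For the ``remains true'' clause I would argue: (i) the only mechanism that writes $c$ is a query ball, and by Lemma~\ref{lem:still-alternate} no query ball is active between the $n$th query ball and the $(n+1)$st computing ball; (ii) polling balls never write $c$ — they only inspect $r$ and dispatch downstream; (iii) the $(n+1)$st computing ball itself merely reads $c$ while in the finite control. So logically $c$ cannot change in the interval in question.

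The step I expect to require the most care is the potential race condition in which the $(n+1)$st computing ball crosses the $c$-chain at the exact moment the $n$th query ball is still writing $c$ inside a 5-cell; this would jam the gears and is flagged in the Topology subsection. I would resolve it with the periodicity argument already sketched there: with the enforced 100-step period, writes to $c$ by a query ball occur at a fixed residue class modulo the period (around position 50), while reads of $c$ by a computing ball descended from a polling ball occur at a different residue class (around position 20). Since these residues are disjoint, no two balls touch the $c$-chain simultaneously, and the read of $c$ by the $(n+1)$st computing ball is well-defined and returns precisely the value set by the $n$th query ball. Combined with the preceding paragraphs this yields the lemma.
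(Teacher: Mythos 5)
Your proposal is correct and takes essentially the approach the paper intends: the paper gives no explicit proof of this lemma, saying only that it is ``proved via induction'' and ``follows by construction,'' and your write-up --- a mutual induction with Lemmas~\ref{lem:correct-head} and~\ref{lem:correct-tape} that traces the $n$th query ball down the Q-line to the unique $h=1$ cell, plus the paper's own mod-100 periodicity argument to rule out a race on the $c$-chain --- is exactly that proof spelled out. No gaps.
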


\begin{lemma}
\label{lem:correct-r}
After any query ball or computing ball is intercepted, the
variable $r$ is set to 1, unless the computation has finished.
\end{lemma}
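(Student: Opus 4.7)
The plan is to prove the lemma by a direct case analysis on the type of ball being intercepted --- query ball or computing ball --- tracing in each case the deterministic path the ball follows from the moment a polling ball dispatches it until interception, and verifying that a Write-1 on the shared $r$-chain occurs along that path with no subsequent Write-0 on $r$. Recall that the $r$-bits of all 5-cells are geared into a single chain, so a single Write-1 anywhere on the path sets $r=1$ globally.

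First I would handle the query ball case. By Lemma~\ref{lem:still-alternate} and Lemma~\ref{lem:still-oneh}, the Q-ball travels down the Q line and branches right into Q' at the unique 5-cell with $h=1$. By construction of the 5-cell (with the order $h,S,c,r,p$ of bit rows, as stipulated in the topology discussion), the Q'-line reads $S$, writes its value into $c$, then crosses the $r$-chain via a Write-1, and is immediately intercepted; no piece after the $r$ write touches $r$ again. Next I would handle the computing ball case, splitting on the halt status and the direction $D$ of the simulated transition. If the current state is $Q_n$ (by Lemma~\ref{lem:correct-Q}, the state recorded in the STM is accurate), the ball is intercepted at the interceptor below $Q_n$ in the STM, and this falls under the ``unless the computation has finished'' escape clause. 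Otherwise, the Router dispatches the ball onto exactly one of L0, L1, R0, R1, the ball descends the Bus, branches into the corresponding inner line at the unique $h=1$ 5-cell, writes $S$, and then crosses the $r$-row via Write-1. From there, the L-subcase writes $p=1$ and is intercepted; the R-subcase writes $p=0$, descends to the next 5-cell, writes $h=1$, and is intercepted. In no subcase does the ball's remaining path cross the $r$-chain again, so $r=1$ at the instant of interception.

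The remaining concern, and the subtlest point, is to argue that nothing external can clobber $r$ between the moment the ball writes it and the moment of interception --- specifically, that a concurrent polling ball cannot flip $r$ back to 0 in that brief window. I would close this gap by invoking the race-condition analysis in the ``Topology'' discussion: at most one polling ball is in flight together with the query/computing ball, and by the periodicity enforced by padding each 5-cell to a fixed number of steps (e.g.\ $100$), polling balls read and write $r$ only near the top of the board at time residues disjoint from those at which the in-cell write occurs. Thus no polling ball can intervene. I expect the main obstacle to be not the logical structure of the argument, which is essentially ``follow the ball and read off the gear-bit instructions from Figure~\ref{fig:rctopology}'', but rather the need to be careful about the right-move subcase of the computing ball, where one must confirm from the 5-cell topology that the ball does cross its own cell's $r$-row before descending --- and that, upon descending to the next 5-cell, it reaches $h$ (and interception) without traversing the next cell's $r$-row and accidentally writing it.
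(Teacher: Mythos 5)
Your proposal is correct and matches the paper's intent: the paper offers no explicit proof of this lemma, merely asserting that it (and its neighbors) ``follow by construction,'' and your case analysis --- tracing the query ball and the computing ball (split on halt status and move direction) through the $h,S,c,r,p$ row order to the final Write-1 on the shared $r$-chain, with the polling-ball race excluded by the mod-100 timing argument --- is exactly the by-construction argument the paper has in mind. The only nit is that your worry about the right-moving ball ``accidentally writing'' the next cell's $r$-row is moot, since all $r$-bits are geared into one chain and only a Write-0 could do harm; the real point, which you do verify, is that no Write-0 on $r$ occurs after the Write-1.
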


\begin{theorem}
\label{thm:correct-simulation}
For all $n$, just after the $n$th computing ball is
intercepted, the TT structure accurately reflects n steps of the
computation of the TM with the given input. 
\end{theorem}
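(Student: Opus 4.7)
The plan is to derive the theorem as an immediate consequence of Lemmas~\ref{lem:correct-Q}, \ref{lem:correct-head}, and \ref{lem:correct-tape}, which together state exactly the three conditions (a), (b), (c) appearing in the earlier Theorem~\ref{thnm:correct} for the original construction. Since the modified machine of this section differs from Section~\ref{sec:simulation} only in how balls are scheduled --- polling balls and the bits $c$, $r$ replace the explicit T0/T1 triggers and the alternation of Q/0/1 hoppers --- the notion of ``accurately reflects $n$ steps'' carries over: the unique $Q_i$ set to $1$ encodes the TM state, the unique 5-cell with $h=1$ encodes the head position, and the $S$ bits encode the tape contents. I would therefore phrase the proof as a short simultaneous induction on $n$, whose three inductive invariants are precisely the three lemmas, and whose conclusion at the end of step $n$ packages them together.

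First I would dispatch the base case $n=0$, which follows directly from the initial set-up described just before the theorem statement ($Q_1=h=c=r=1$ appropriately, $S$ holding the input, all other gear bits $0$). Then for the inductive step, Lemma~\ref{lem:still-alternate} guarantees that between the $(n-1)$st and $n$th computing ball there is exactly one intervening query ball, possibly interspersed with polling balls that merely observe $r$ and self-intercept when $r=0$. Lemma~\ref{lem:correct-c} tells us that by the time the $n$th computing ball leaves the top, $c$ holds the symbol at the currently scanned 5-cell, so the ball is routed into the correct half of the STM. From there the trajectory of the computing ball through the STM, the Router, and the Bus is identical to that analyzed in the proof of Theorem~\ref{thnm:correct}: the correct next state $Q_j$ is written into the $Q'$-row, the Router selects the correct one of L0, L1, R0, R1 from the transition function, and within the 5-cell with $h=1$ the $S$-bit is updated while $h$ is transferred to the appropriate neighbor, either via $p$ (left move) or via the $h$-bit of the next 5-cell (right move). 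Applying Lemmas~\ref{lem:correct-Q}, \ref{lem:correct-head}, and \ref{lem:correct-tape} at the instant the $n$th computing ball is intercepted then delivers (a), (b), (c).

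The main obstacle, and the reason this section requires more care than Section~\ref{sec:simulation}, is concurrency: a polling ball near the top can be in flight while a computing or query ball below is still active, and both may traverse the same $r$- and $c$-gear chains. The hard part of the argument is therefore not the inductive bookkeeping but justifying that the polling ball's semantics really is ``if $r=1$, become a compute or query ball and set $r$ to $0$; if $r=0$, self-intercept,'' with no race condition against the ball below. I would handle this by invoking the timing argument from the Topology discussion: padding every 5-cell to exactly $100$ vertical steps so that polling balls interact with $r$ and $c$ at times $10$ and $20 \bmod 100$, while the at-most-one lower ball interacts with them at times $50$ and $60 \bmod 100$, making simultaneous access impossible. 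Combined with Lemma~\ref{lem:correct-r}, which certifies that $r=1$ is re-established exactly when the lower ball finishes, this reduces the concurrent system to the clean alternation analysed in Section~\ref{sec:simulation} and closes the induction.
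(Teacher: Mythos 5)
Your proposal is correct and matches the paper's own proof, which likewise obtains (a), (b), (c) directly from Lemmas~\ref{lem:correct-head}, \ref{lem:correct-tape}, and \ref{lem:correct-Q} (together with Lemmas~\ref{lem:oneQ} and~\ref{lem:still-oneh}). The extra material you include---the induction scaffolding, the role of Lemma~\ref{lem:correct-c}, and the timing argument against race conditions---is material the paper delegates to the lemmas themselves rather than restating in the theorem's proof, so the decomposition is essentially the same.
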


\begin{proof}
We need to show that (a), the structure reflects that the
tape head is in the correct position, (b) the tape is accurate, and (c)
the state is accurate.  Lemmas~\ref{lem:still-oneh} and \ref{lem:correct-head}
ensure that (a) holds. By
Lemma~\ref{lem:correct-tape}, (b) holds. And, by Lemmas~\ref{lem:oneQ} and \ref{lem:correct-Q}, (c) is true.  (Note Lemma~\ref{lem:correct-Q}
relies on Lemma~\ref{lem:correct-c}).
\qed
\end{proof}

Corollary~\ref{cor:simulates}, which states that the structure computes exactly what the TM does, still holds, but with a slightly different proof as the TM
doesn't alternate query balls and compute balls, but there are
intervening polling balls. (We rely on Lemmas~\ref{lem:still-alternate},
\ref{lem:computing-continues}, 
and \ref{lem:correct-r}.)

\subsection*{Removing the Infinite Gear Chains}

One might object to the inclusion of infinite gear chains, finding them
an unreasonable extension of the model.  However, if desired, the
infinite $r$-chain and $c$-chain can both be replaced with a ``ladder'' of
nested $c$'s and $r$'s, as shown. This structure is called the ``query
ladder''.  This is actually a general method that can be used to
replace any number of infinite vertical chains with non-infinite nested
alternating ``C-shapes''. The price paid is that to move the signal up
the ladder $n$ steps requires $n$ balls to pass through.

How does it work? Referring to Figure~\ref{fig:one-rung}, a ball falls down the left
edge, where the variable $r$ is tested. As long as $r=0$, it continues
straight down through the ``rungs'' of the ladder, ignoring ``$c$'' values
as it passes through them.  When $r=1$ is found, the ball branches to the
right while setting $r=0$ on this rung, tests $c$ and branches, and then
writes the value of $c$, and then $r=1$, which are connected to the $r$-test
and $c$-test on the rung above. The ball is then intercepted.  The next
ball to pass down the ladder will find $r=1$ and the current value of $c$ at
one higher rung on the ladder.  Note that there will only ever be at
most one rung with $r=1$. 

\begin{figure}[htp]
\begin{center}
\includegraphics[scale=0.35]{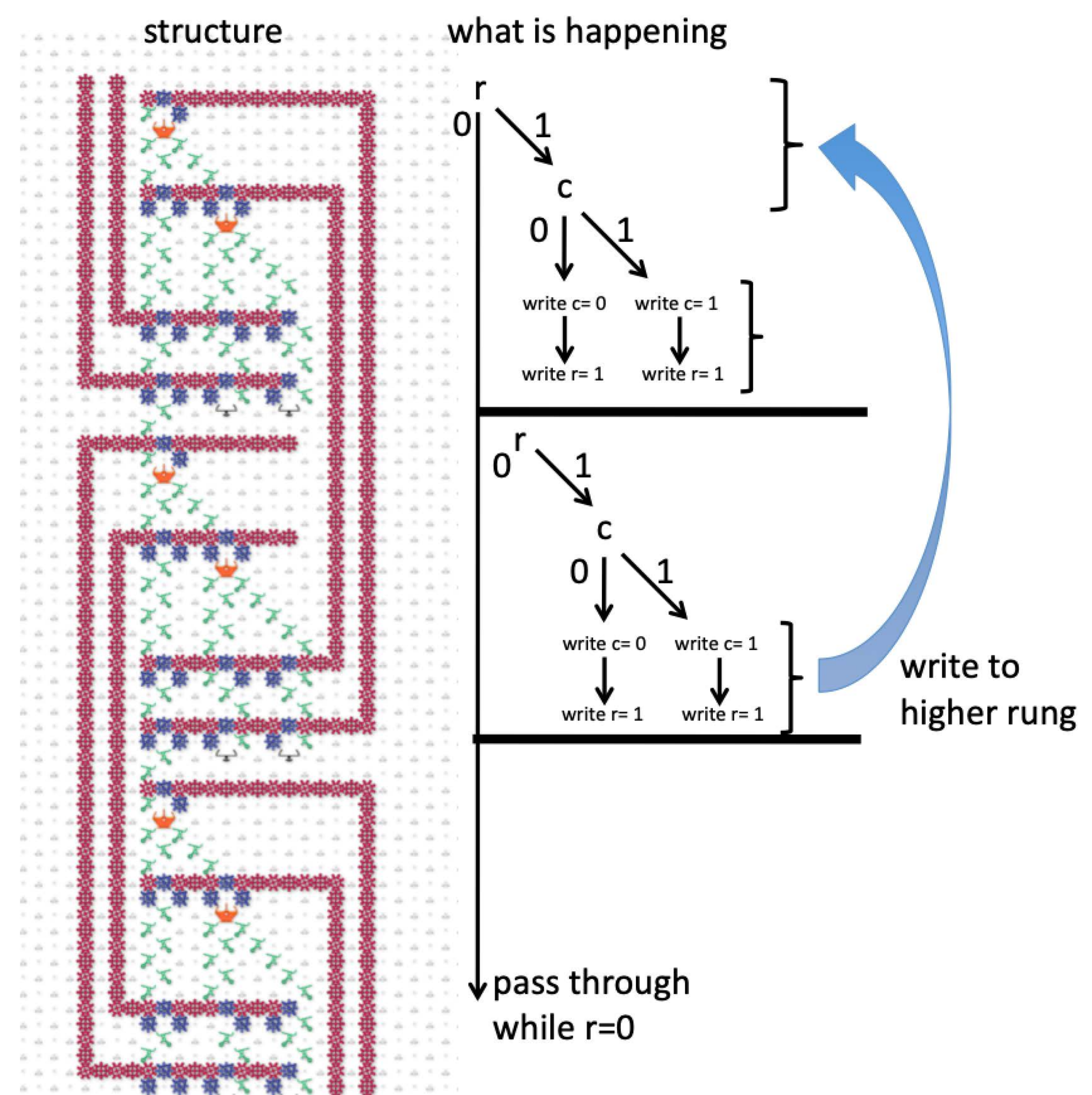}
\end{center}
\caption{A Signal Moves Up One Rung on the Query Ladder.}
 \label{fig:one-rung}
\end{figure}

In order to make use of the ladder, which replaces the infinite $c$-chain
and $r$-chain, we must be able to connect the values of $c$ and $r$ from each
5-cell to this ladder without crossing gear chains.  Instead of routing
the $c$ bits to the left and $r$ bits to the right, as in Figure~\ref{fig:rctopology}, we
route both $c$ and $r$ bits to the right, where the values join the ladder
as shown in Figure~\ref{fig:connections} 
 below, which omits all pieces except gear chains
due to space limitations. As can be seen, this is not difficult to do
with a little care.  Between each gap between $p$ in an odd cell numbered $k$
and $h$ in the even cell numbered $k+1$ just below it, we route $c$ and $r$ from both
cells, and the order of the values as they come out will be, top to
bottom: $r$ (cell $k$), $c$ (cell $k$), $r$ (cell $k+1$), $c$ (cell $k+1$).

Now these chains must be connected to the ladder, as follows: The first $r$ and $c$ pass through the gap between left
braces and then connect to the bottom of the left brace they just went
under, and the second $r$ and $c$ pass through the gap and head up to
connect to the $r$ and $c$ from the top of the right brace. Refer to Figure~\ref{fig:connections}.    Note that no pieces other than gear chains are shown due to space limitations.  In the actual structure, for example, spacing
would be farther apart to allow ramp chains to pass through the gear chains in many places, using the ``ignore'' structure.

\begin{figure}[htp]
\begin{center}
\includegraphics[scale=0.25]{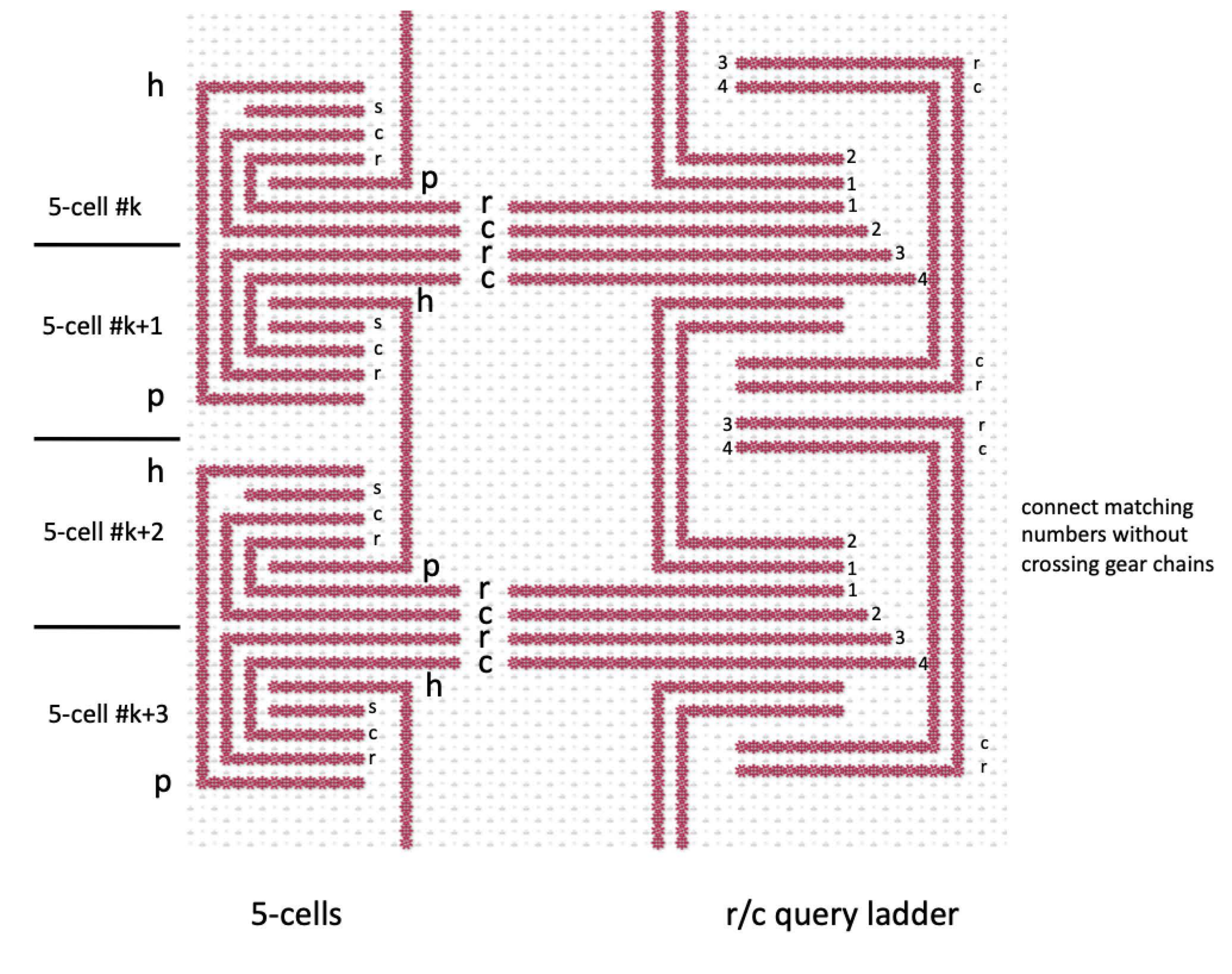}
\end{center}
\caption{Connections between 5-cells and query ladder.}
 \label{fig:connections}
\end{figure}

We sketch the ideas needed to make use of the query ladder; the
full construction and argument are left as an exercise.   There are
three modes of ball behavior: 
\begin{enumerate}
\item Computing within the 5-cells. In
this mode a ball is searching for the active cell with $h=1$, and will execute
the transition when it is found. 
\item Querying the currently scanned
symbol. In this mode a ball is traversing down looking for the scanned
symbol $S$ in the new cell for which $h=1$. It then sets $c$ to $S$, and $r$ to 1, causing the query ladder to have the correct values loaded.
\item
Climbing. In this mode balls repeatedly traverse down the query ladder
to cause the $r$-bit and $c$-bit to climb one rung at a time.
\end{enumerate}

It is not difficult to create these three modes as cyclic behavior for
balls being released at the top, by having a 2-bit, 4-state counter with
one null state.  The counter does not increase while climbing balls
are being released, until the $r=1$ signal is received at the top of the
ladder.

Balls are released every 100 steps, and query the value of $r$ at the
top. If $r=1$, the ball becomes a computing ball. The next ball becomes
a querying ball. After the computing ball is intercepted in the active
5-cell, the querying ball comes along (about 100 steps behind it), reads
the symbol $S$ in the newly scanned cell (where $h=1$), it sets $r$ to 1, and
sets $c$ to the value $S$,  and these values are therefore written to the
query ladder nearby.    All balls released after the querying ball
are climbing balls, which move $r$ and $c$ up the ladder.   When $r=1$ is
seen by a newly released ball at the top, $r$ is reset to 0 and the mode
changes to computing ball, to begin the cycle again.

\section{Conclusion}

We have seen three constructions of increasing complexity that give
direct simulations of a Turing machine by natural extensions of the
Turing Tumble toy, using an infinite playing board and an infinite
number of pieces. In the final construction, only one trigger and ball
hopper are used, and no infinite chains are needed.   As discussed in the Introduction, the infinite board and pieces are initially set with a simple
repeating pattern of 0 bits, to represent the tape of the TM.  If
needed, the simulation could proceed without an infinite board, provided
that whenever needed, an additional board and additional pieces are
added.

It is interesting to note that because of the existence of
{\href{https://en.wikipedia.org/wiki/Universal_Turing_machine}{Universal
TMs,}  there is a single TM (for example, one with only fifteen states and two
symbols) that is universal. Hence, a single fifteen-state TM can be encoded
on Turing Tumble that can simulate any computation.}

\section*{Acknowledgments}
 
We thank El$'\!$endia Starman, Jesse Crossen, Richard
Pawson, and Teijiro Isokawa for helpful discussions.  
The images were produced via screenshots from the
TT {\href{https://tinyurl.com/ttsimulator}{simulator}
of J. Crossen.    Finally, we thank Paul and Alyssa Boswell, for their great invention and for making so many resources readily available. 


\bibliography{TT-is-TC}

%


\end{document}